\numberwithin{equation}{section}
\newtheorem{Theorem}{Theorem}[section]
\newtheorem*{Theorem*}{Theorem}
\newtheorem{Proposition}[Theorem]{Proposition}
 { \theoremstyle{definition}

\newtheorem{Remark}[Theorem]{Remark} }
\begin{document}


\newcommand{\arXivNumber}{2401.10445}

\renewcommand{\PaperNumber}{081}
	
\FirstPageHeading

\ShortArticleName{Multidimensional Nonhomogeneous Quasi-Linear Systems}

\ArticleName{Multidimensional Nonhomogeneous Quasi-Linear\\ Systems and Their Hamiltonian Structure}

\Author{Xin HU and Matteo CASATI}
\AuthorNameForHeading{X.~Hu and M.~Casati}

\Address{School of Mathematics and Statistics, Ningbo University, Ningbo 315211, P.R.~China}
\Email{\href{mailto:1826490374@qq.com}{1826490374@qq.com}, \href{mailto:matteo@nbu.edu.cn}{matteo@nbu.edu.cn}}
\URLaddress{\url{https://math.nbu.edu.cn/info/1046/4832.htm}}

\ArticleDates{Received July 09, 2024, in final form September 04, 2024; Published online September 10, 2024}

\Abstract{In this paper, we investigate multidimensional first-order quasi-linear systems and find necessary conditions for them to admit Hamiltonian formulation. The insufficiency of the conditions is related to the Poisson cohomology of the admissible Hamiltonian operators. We present in detail the examples of two-dimensional, two-components systems of hydrodynamic type and of a real reduction of the 3-waves system.}

\Keywords{Hamiltonian structures; quasilinear systems; non-homogeneous operators}

\Classification{37K10; 37K25}

\section{Introduction}
Hamiltonian systems are mathematical models used to describe the dynamical behavior of physical systems. They are based on Hamiltonian principle and Hamiltonian equations, which utilize generalized coordinates and generalized momenta to characterize the state of the system. Hamiltonian systems find wide-ranging applications in classical mechanics, quantum mechanics, and other fields. An autonomous system of evolutionary PDEs
\begin{gather*}
u^i_t=F^i\big(u^j,u^j_x,\dots,u^j_{kx}\big), \qquad i,j=1,2,\dots,n
\end{gather*}
in two dependent variables $(x,t)$ is referred to as Hamiltonian, with a Hamiltonian structure~$P$ and a Hamiltonian functional $H$, if it can be
expressed as
$
u^i_t=P^{ij}\frac{\delta H}{\delta u^j}$,
where $P^{ij}$ is a~skewadjoint (pseudo)-differential operator with vanishing Schouten torsion (see \cite{olver1993applications}), referred to as a~Hamiltonian structure, and $H=\int h\big(u^j,u^j_x,\dots\big) {\rm d}x$ is a local functional.

To date, many researchers have applied advanced mathematical methods and techniques, such as Lie algebra \cite{olver1993applications}, Poisson geometry \cite{weinstein1983local}, symplectic geometry \cite{mokhov1995symplectic}, etc., to investigate the Hamiltonian structures of nonlinear evolutionary equations, including the KdV equation, the KN equation \cite{mokhov1995symplectic}, the sine-Gordon equation, the KP equation \cite{dubrovin2001integrable,olver1980hamiltonian}, and so on.

In the paper by Vergallo \cite{vergallo2023non}, an investigation has been conducted on a class of evolutionary systems known as first-order quasi-linear systems (it was firstly investigated in \cite{Vergallo_2022})
\begin{gather*}
u^i_t=V^i_l(u)u^l_x+W^i(u), \qquad i=1,2,\dots, n.
\end{gather*}
Through the application of cotangent covering theory \cite{krasil1989nonlocal,vergallo2021homogeneous}, the author conducted a study on the compatibility conditions for (1+0)-order nonhomogeneous quasi-linear systems to be Hamiltonian systems under the nonhomogeneous hydrodynamic operator.

A multidimensional Hamiltonian system is an evolutionary system
\begin{gather*}
u^i_t=F^i\big(u^j,u^j_\alpha,u^j_{\alpha\beta},\dots\big), \qquad i,j=1,2,\dots,n,
\end{gather*}
which can be written as \smash{$u^i_t=P^{ij}\frac{\delta H}{\delta u^j}$}, where $P^{ij}$ is a Hamiltonian operator and $H=\int h\big(u^j,u^j_\alpha,\allowbreak \smash{u^j_{\alpha\beta},\dots\big) {\rm d}x^D}$ and $\alpha ,\beta ,\dots$ is a component of variable $x$ in $D$-dimensional space. \smash{$u^j_\alpha$} is a shorthand notation for \smash{$\frac{\partial u^j}{\partial x^\alpha}$}, \smash{$u^j_{\alpha\beta}$} for \smash{$\frac{\partial^2u^j}{\partial x^\alpha \partial x^\beta}$}, etc.

Therefore, the idea of this paper is to extend the study of quasi-linear systems done by Vergallo to include the $D$-dimensional case ($D>1$), namely
\begin{equation}\label{multiquasi}
u^i_t=V^{i\alpha}_l(u)u^l_\alpha+W^i(u), \qquad i=1,2,\dots, n.
\end{equation}
The question is: which quasi-linear systems admit Hamiltonian formulations (namely the systems \eqref{multiquasi} can be expressed as $u^i_t=P^{ij}\frac{\delta H}{\delta u^j}$) in the multidimensional case? The main approach involves the application of Poisson vertex algebra (PVA) theory to compute the compatibility conditions of $D$-dimensional nonhomogeneous Hamiltonian operators, along with the application of cotangent covering theory to find compatible quasi-linear systems under such operators. Finally, the compatibility conditions for quasi-linear systems to be Hamiltonian are computed in the $D=N=2$ case, and we relate the compatibility conditions with the first Poisson cohomology group of the Hamiltonian operator.
\section{Nonhomogeneous multidimensional Hamiltonian structures}
\subsection{PVAs and Hamiltonian structures}
Let us consider the space of maps $\mathcal{M}$ from a $D$-dimensional manifold $X$ to a $N$-dimensional target manifold $U$. Chosen the coordinate systems ${u^i}$ and ${x^\alpha}$ , the quotient space of local functionals is defined as $\mathcal{F}:=\frac{\mathcal{A}}{\partial_1\mathcal{A}+\partial_2\mathcal{A}+\dots+\partial_D\mathcal{A}}$ \cite{olver1993applications}, where $\mathcal{A}=\mathcal{A}(U)$ is the space of differential polynomials. Now, we endow the space $\mathcal{F}$ with a Lie bracket $\lbrace \ , \ \rbrace$, also known as the local Poisson bracket.
A Lie bracket on $\mathcal F$ can equivalently defined in terms of a local Poisson bivector, $P \in \Lambda^2$, by
\begin{equation}\label{eq:pb-def}
P(\delta F,\delta G)=\lbrace F,G \rbrace =\int \frac{\delta F}{\delta u^i}P^{ij}\frac{\delta G}{\delta u^j},
\end{equation}
where \smash{$P^{ij}=P^{ij}_S\partial^S$}, $P^{ij}_S \in \mathcal{A}$, is a skewadjoint differential operator such that $[P,P]=0$, referred to as a Hamiltonian structure, here we denote
\begin{gather*}
S=(s_1,s_2,\dots,s_D), \qquad \partial^S=\left(\frac{{\rm d}}{{\rm d}x^1}\right)^{s_1}\left(\frac{\rm d}{{\rm d}x^2}\right)^{s_2}\cdots\left(\frac{\rm d}{{\rm d}x^D}\right)^{s_D}, \\ \frac{\delta F}{\delta u^i}=\sum\limits_{S}(-1)^{s_1+s_2+\dots+s_D}\partial^S\left(\frac{\partial F}{\partial\big(\partial^Su^i\big)}\right).
\end{gather*}
 The skewadjointness of $P$ is equivalent to the skewsymmetry of the Poisson bracket and the vanishing of the Schouten torsion $[P,P]$ \cite{olver1993applications} corresponds to the Jacobi identity.

Next, we will briefly demonstrate the application of Poisson vertex algebra (PVA) theory to compute the conditions for an operator to be Hamiltonian. Detailed information about PVA can be found in \cite{barakat2009poisson}. A multidimensional Poisson vertex algebra (mPVA) is a differential algebra~$(\mathcal{V}, \partial)$ endowed with
a bilinear operation $\mathcal{V} \times \mathcal{V}\rightarrow \mathbb{R}[\lambda_1,\dots,\lambda_D] \otimes\mathcal{V}$ called the $\lambda$-bracket and denoted $\lbrace f_{\lambda} g\rbrace=C_{s_1,s_2,\dots,s_D}(f,g)\lambda^{s_1}_1\lambda^{s_2}_2\cdots\lambda^{s_D}_D=C_S(f,g)\lambda^S$. We take $\mathcal{V}$ to be the space $\mathcal{A}$ of differential polynomials and \smash{$\partial_\alpha=u^i_{S+\xi_\alpha}\frac{\partial}{\partial u^i_S}$}, where $u^i_S$ is a shorthand notation for $\partial^S u^i$ and
 \[
\xi_{\alpha}=(0,\ldots,0,\underbrace{1}_{\alpha},0,\ldots)
\] is the canonical basic vectors in $\mathbb{Z}^D$.

We can explicitly compute a $\lambda$-bracket between to differential polynomials using the so-called master formula for a multidimensional Poisson vertex algebra \cite{casati2015deformations} as it follows
\begin{equation}\label{eq:master}
\lbrace f_{\lambda}g\rbrace=\sum_{i,j=1,\dots,N}\sum_{S,M\in \mathbb{Z}^D_{\geq0}}\frac{\partial g}{\partial u^j_S}(\lambda+\partial)^S\big\lbrace u^i_{\lambda+\partial}u^j\big\rbrace(-\lambda-\partial)^M\frac{\partial f}{\partial u^i_M}.
\end{equation}

The following two theorems \cite{casati2015deformations} are presented to illustrate the relationship between Poisson vertex algebras and local Poisson brackets.
\begin{Theorem}
Let us define a bilinear bracket on $\mathcal{F}$
\[
\lbrace F,G\rbrace:=\int {\lbrace f_{\lambda} g\rbrace}|_{\lambda=0},
\]
where $f,g\in \mathcal{A}$ are the densities of $F$, $G$.
If the $\lambda$-bracket satisfies the axioms of a PVA, then the bracket we defined is a local Poisson bracket.
\end{Theorem}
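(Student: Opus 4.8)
The plan is to verify the three defining properties of a local Poisson bracket for the bilinear operation $\lbrace F,G\rbrace=\int\lbrace f_\lambda g\rbrace|_{\lambda=0}$: that it is well defined on the quotient $\mathcal{F}=\mathcal{A}/(\partial_1\mathcal{A}+\dots+\partial_D\mathcal{A})$, that it is skew-symmetric, and that it satisfies the Jacobi identity. Each will be derived from the corresponding PVA axiom for the $\lambda$-bracket. The organizing principle is that $\int\partial_\alpha a=0$ in $\mathcal{F}$, so that after integration every term carrying a positive power of some $\lambda_\alpha$ or an overall total derivative $\partial_\alpha$ is annihilated; in other words, $\int(\cdot)|_{\lambda=0}$ projects onto the ``constant part'' of the $\lambda$-bracket modulo total derivatives.

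First I would check that the bracket descends to $\mathcal{F}$, i.e.\ that it is insensitive to shifting either density by a total derivative. By sesquilinearity in the first entry, $\lbrace(\partial_\alpha a)_\lambda g\rbrace=-\lambda_\alpha\lbrace a_\lambda g\rbrace$, which vanishes at $\lambda=0$; by sesquilinearity in the second entry, $\lbrace f_\lambda(\partial_\alpha b)\rbrace=(\lambda_\alpha+\partial_\alpha)\lbrace f_\lambda b\rbrace$, which at $\lambda=0$ reduces to the total derivative $\partial_\alpha(\lbrace f_\lambda b\rbrace|_{\lambda=0})$ and so integrates to zero. For skew-symmetry I would use the PVA axiom $\lbrace f_\lambda g\rbrace=-\lbrace g_{-\lambda-\partial}f\rbrace$. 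Writing $\lbrace g_\mu f\rbrace=\sum_S c_S\mu^S$ with $c_S\in\mathcal{A}$, its right-hand side at $\lambda=0$ equals $-\sum_S(-\partial)^S c_S$ (with $\partial$ acting on $c_S$), of which only the $S=0$ summand survives under the integral; hence $\lbrace F,G\rbrace=-\int c_0=-\lbrace G,F\rbrace$.

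The Jacobi identity is the substantial step, and I expect it to be the main obstacle. Starting from the PVA Jacobi axiom
\[
\lbrace f_\lambda\lbrace g_\mu h\rbrace\rbrace-\lbrace g_\mu\lbrace f_\lambda h\rbrace\rbrace=\lbrace\lbrace f_\lambda g\rbrace_{\lambda+\mu}h\rbrace,
\]
the strategy is to apply $\int(\cdot)|_{\lambda=\mu=0}$ to both sides. For the left-hand side one needs the reduction $\lbrace F,\lbrace G,H\rbrace\rbrace=\int\lbrace f_\lambda\lbrace g_\mu h\rbrace\rbrace|_{\lambda=\mu=0}$: since the outer bracket is $\mathbb{R}[\mu]$-linear, setting $\mu=0$ inside commutes with the outer bracket and produces exactly $\lbrace f_\lambda d\rbrace$ with $d=\lbrace g_\mu h\rbrace|_{\mu=0}$ a density of $\lbrace G,H\rbrace$, after which the well-definedness proved above guarantees independence of the choice of $d$. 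The right-hand side is the delicate one: expanding the shifted bracket $\lbrace\lbrace f_\lambda g\rbrace_{\lambda+\mu}h\rbrace$, the replacement of the bracket variable by $\lambda+\mu$ introduces factors of the form $(\lambda+\mu+\partial)$ acting on the coefficients of $\lbrace f_\lambda g\rbrace$; upon setting $\lambda=\mu=0$ these factors reduce to powers of $\partial$, all of which save the zeroth give total derivatives that vanish under the integral. Hence the right-hand side collapses to $\lbrace\lbrace F,G\rbrace,H\rbrace$, the bracket of the local functional $\lbrace F,G\rbrace$ (with density $\lbrace f_\lambda g\rbrace|_{\lambda=0}$) against $H$. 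The bulk of the difficulty is precisely this bookkeeping, where the multi-indices $S$ and the operators $\partial^S$ must be tracked simultaneously across the nested brackets.

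Combining the two sides gives $\lbrace F,\lbrace G,H\rbrace\rbrace-\lbrace G,\lbrace F,H\rbrace\rbrace=\lbrace\lbrace F,G\rbrace,H\rbrace$, and a final application of skew-symmetry rearranges this into the cyclic form $\lbrace\lbrace F,G\rbrace,H\rbrace+\lbrace\lbrace G,H\rbrace,F\rbrace+\lbrace\lbrace H,F\rbrace,G\rbrace=0$, establishing that $\lbrace\,,\rbrace$ is a Lie bracket on $\mathcal{F}$. Together with its evident locality, namely that it is obtained by integrating a differential-polynomial density built from the $\lambda$-bracket, this shows that $\lbrace\,,\rbrace$ is a local Poisson bracket, as claimed.
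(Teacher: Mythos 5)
The paper itself does not prove this theorem: it is quoted verbatim from \cite{casati2015deformations} (the argument going back to \cite{barakat2009poisson}), so there is no internal proof to compare against. Your proposal is exactly the standard argument from that literature and it is correct: well-definedness on $\mathcal{F}$ from sesquilinearity, skew-symmetry from the PVA skew-symmetry axiom $\lbrace f_\lambda g\rbrace=-\lbrace g_{-\lambda-\partial}f\rbrace$ plus discarding total derivatives under $\int$, and the Lie property by evaluating the PVA-Jacobi identity at $\lambda=\mu=0$ under the integral, identifying the three terms with $\lbrace F,\lbrace G,H\rbrace\rbrace$, $\lbrace G,\lbrace F,H\rbrace\rbrace$ and $\lbrace\lbrace F,G\rbrace, H\rbrace$. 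One small imprecision: with the usual convention the term $\lbrace\lbrace f_\lambda g\rbrace_{\lambda+\mu}h\rbrace$ is $\mathbb{R}[\lambda]$-linear in the coefficients of $\lbrace f_\lambda g\rbrace=\sum_S a_S\lambda^S$, so at $\lambda=\mu=0$ it collapses immediately to $\lbrace (a_0)_\nu h\rbrace|_{\nu=0}$ with $a_0$ a density of $\lbrace F,G\rbrace$ --- the $(\lambda+\mu+\partial)$ bookkeeping you invoke only appears if one expands via the master formula \eqref{eq:master} (or uses skew-symmetry to flip the bracket), and in either reading the extra $\partial$-terms are total derivatives that vanish under the integral, so your conclusion stands.
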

\begin{Theorem}
Given a Hamiltonian structure $P=P^{ij}_S\partial^S$ on $\mathcal{A}$,
the $\lambda$-bracket defined on generators as
\[
\big\lbrace u^i_{\lambda}u^j\big\rbrace:=\sum P^{ji}_S\lambda^S
\]
and extended to the full algebra using the master formula satisfies the axioms of the mPVA.
\end{Theorem}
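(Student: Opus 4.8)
The plan is to verify the four axioms defining an mPVA---sesquilinearity, the left and right Leibniz rules, skewsymmetry, and the Jacobi identity---for the bracket obtained by extending $\{u^i_\lambda u^j\}=\sum_S P^{ji}_S\lambda^S$ through the master formula \eqref{eq:master}, showing in each case that the axiom reduces to a structural property of $P$ that is part of the hypothesis that $P$ is a Hamiltonian structure. The strategy is thus to split the axioms into those that hold \emph{for free} and those that encode the two defining properties of $P$ (skewadjointness and vanishing Schouten torsion).

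First I would dispatch sesquilinearity and both Leibniz rules. These hold for \emph{any} prescribed values of $\{u^i_\lambda u^j\}$ on the generators, because they are built into the combinatorial shape of \eqref{eq:master}: the factor $(\lambda+\partial)^S$ attached to $\partial g/\partial u^j_S$ and the factor $(-\lambda-\partial)^M$ attached to $\partial f/\partial u^i_M$ encode precisely how the bracket must transform when $\partial_\alpha$ acts on the second or first entry, while the derivations $\partial/\partial u^j_S$ and $\partial/\partial u^i_M$ produce the Leibniz rules in each argument. This step is a direct, if slightly tedious, manipulation of \eqref{eq:master} and uses no special feature of $P$. Next I would reduce the two remaining axioms to the generators: there is a standard lemma in the theory of (m)PVAs stating that, once sesquilinearity and the Leibniz rules hold, both skewsymmetry and the Jacobi identity are stable under the application of $\partial_\alpha$ and under products in each entry. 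Consequently it suffices to check them on the generators $u^i$, and I would prove this reduction by induction on the differential order and polynomial degree of the entries, using sesquilinearity to strip derivatives and the Leibniz rules to split products.

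On the generators, skewsymmetry $\{u^j_\lambda u^i\}=-\{u^i_{-\lambda-\partial}u^j\}$ reads
\[
\sum_S P^{ij}_S\lambda^S=-\sum_S(-\lambda-\partial)^S\big(P^{ji}_S\big),
\]
where $\partial$ acts on the coefficients; matching coefficients of $\lambda^S$ returns exactly the condition $P^{ij}=-(P^{ji})^{*}$ that $P$ be skewadjoint, which holds by hypothesis. The crux, and the step I expect to be the main obstacle, is the Jacobi identity on generators,
\[
\{u^i_\lambda\{u^j_\mu u^k\}\}-\{u^j_\mu\{u^i_\lambda u^k\}\}=\{\{u^i_\lambda u^j\}_{\lambda+\mu}u^k\}.
\]
Expanding all three terms with the master formula \eqref{eq:master} produces a polynomial identity in $\lambda,\mu$ with coefficients in $\mathcal{A}$, and the heart of the argument is to recognize these coefficients as the components of the Schouten torsion $[P,P]$ evaluated on $u^i,u^j,u^k$. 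I would therefore write $[P,P]$ explicitly in the bivector (or $\theta$-)formalism, match its vanishing---$[P,P]=0$, guaranteed by the hypothesis---term by term against the expanded Jacobi relation, and conclude. Organizing this bookkeeping so that the symmetrizations in the pairs $(i,\lambda)$, $(j,\mu)$ and $(k,\lambda+\mu)$ align correctly with the definition of the Schouten bracket is the delicate part; once that correspondence is set up, everything else is formal.
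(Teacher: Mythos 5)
The first thing to say is that the paper contains no proof of this statement: it is one of the two theorems imported verbatim from \cite{casati2015deformations}, so there is no internal argument to compare yours against. Your proposal is, however, a correct reconstruction of the standard proof given in that reference (and in \cite{barakat2009poisson} for $D=1$). The decomposition is the right one: sesquilinearity and the two Leibniz rules are automatic consequences of the shape of the master formula \eqref{eq:master}, valid for \emph{any} assignment of $\{u^i_{\lambda}u^j\}$ on generators; the remaining two axioms reduce to the generators; skewsymmetry on generators, $\sum_S P^{ij}_S\lambda^S=-\sum_S(-\lambda-\partial)^S\big(P^{ji}_S\big)$, is literally the symbol form of skewadjointness $P^*=-P$; and the PVA--Jacobi identity \eqref{PJ} on generators matches, coefficient by coefficient in $\lambda$, $\mu$, the components of the Schouten torsion $[P,P]$ evaluated on the triple $\big(u^i,u^j,u^k\big)$ --- an equivalence the paper itself exploits when it derives Hamiltonian conditions by computing \eqref{PJ}. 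One point in your outline deserves sharpening: the reduction of the Jacobi identity to generators is \emph{not} independent of skewsymmetry. In the standard lemma, skewsymmetry of the full bracket is a hypothesis of the Jacobi-reduction step, and likewise the identification of the three Jacobi terms with the trivector $[P,P]$ breaks down for non-skewadjoint $P$, since the Schouten bracket is only defined on genuine (skewadjoint) bivectors. So your order of verification --- skewsymmetry before Jacobi --- is mandatory rather than a stylistic choice, and the induction you propose for the reduction must carry global skewsymmetry along as an already-established ingredient.
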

These two theorems demonstrate the equivalence between mPVAs and multidimensional Hamiltonian structures. In particular, the Jacobi identity is equivalent to the so-called PVA-Jacobi identity. Thus, we can obtain a Hamiltonian structure by computing the Jacobi identity of the $D$-dimensional $\lambda$-bracket, among the generators (it is equivalent to the Jacobi identity of the generic densities), namely
\begin{equation}\label{PJ}
\big\lbrace u^i_{\lambda}\big\lbrace u^j_{\mu} u^k\big\rbrace\big\rbrace -\big\lbrace u^j_{\mu}\big\lbrace u^i_{\lambda} u^k\big\rbrace\big\rbrace =\big\lbrace\big\lbrace u^i_{\lambda} u^j\big\rbrace_{\mu +\lambda}u^k\big\rbrace.
\end{equation}

\subsection{The homogeneous case}
Let us consider homogeneous differential operators of order $m$ with the following form:
\begin{gather*}
P^{ij}=g^{ij}_{S_1}(u)\partial^{S_1}+b^{ij\alpha}_{kS_2}(u) u^k_\alpha\partial^{S_2}+c^{ij\alpha\beta}_{kS_3}(u)u^k_{\alpha\beta}\partial^{S_3}+c^{ij\alpha\beta}_{klS_3}(u)u^k_\alpha u^l_\beta \partial^{S_3}+\cdots,
\end{gather*}
where
\begin{gather*}
S_n=(s_1,s_2,\dots,s_D) , \qquad n=1,2,\dots,
\\
\partial ^{S_n}=\left(\frac{\rm d}{{\rm d}x^1}\right)^{s_1}\left(\frac{\rm d}{{\rm d}x^2}\right)^{s_2}\cdots \left(\frac{\rm d}{{\rm d}x^D}\right)^{s_D} , \qquad s_1+s_2+\dots+s_D=m-n+1.
\end{gather*}
One important class of Poisson brackets is the Poisson bracket for which a first-order homogeneous differential operator
\begin{equation}\label{OP}
P^{ij}=g^{ij\alpha}(u)\partial_\alpha +b^{ij\alpha}_k(u)u^k_\alpha
 \end{equation}
satisfies $P^* =-P$ and $[P,P]=0$. Such brackets, defined by Dubrovin and Novikov \cite{dubrovin1996hamiltonian}, are called Poisson brackets of hydrodynamic type or DN brackets.

We recall the conditions satisfied by the coefficients $\big(g^{ij\alpha},b^{ij\alpha}_k\big)$ in the multidimensional DN brackets, provided by Mokhov~\cite{mokhov1988poisson}.

\begin{Theorem}
The operator \eqref{OP} is Hamiltonian if and only if
\begin{gather}
 g^{ij\alpha}=g^{ji\alpha},\qquad
 \frac{\partial g^{ij\alpha}}{\partial u^k}=b^{ij\alpha}_k+b^{ji\alpha}_k,\qquad
 \sum_{(\alpha,\beta)}\big(g^{li\alpha}b^{jk\beta}_l-g^{lj\beta}b^{ik\alpha}_l\big)=0,\nonumber\\
 \sum_{(i,j,k)}\big(g^{li\alpha}b^{jk\beta}_l-g^{lj\beta}b^{ik\alpha}_l\big)=0,\qquad
 \sum_{(\alpha,\beta)}\left[g^{li\alpha}\left(\frac{\partial b^{jk\beta}_l}{\partial u^r}-\frac{\partial b^{jk\beta}_r}{\partial u^l} \right)+b^{ij\alpha}_lb^{lk\beta}_r-b^{ik\alpha}_lb^{lj\beta}_r\right]=0,\nonumber\\
 g^{li\beta}\frac{\partial b^{jk\alpha}_r}{\partial u^l}-b^{ij\beta}_lb^{lk\alpha}_r-b^{ik\beta}_lb^{jl\alpha}_r=g^{lj\alpha}\frac{\partial b^{ik\beta}_r}{\partial u^l}-b^{ji\alpha}_lb^{lk\beta}_r-b^{jk\alpha}_lb^{il\beta}_r,\nonumber\\
 \frac{\partial}{\partial u^s}\left[g^{li\alpha}\left(\frac{\partial b^{jk\beta}_l}{\partial u^r}-\frac{\partial b^{jk\beta}_r}{\partial u^l}\right)+b^{ij\alpha}_lb^{lk\beta}_r-b^{ik\alpha}_lb^{lj\beta}_r\right] \nonumber \\
 \qquad+\frac{\partial}{\partial u^r}\left[g^{li\beta}\left(\frac{\partial b^{jk\alpha}_l}{\partial u^s}-\frac{\partial b^{jk\alpha}_s}{\partial u^l}\right)+b^{ij\beta}_lb^{lk\alpha}_s-b^{ik\beta}_lb^{lj\alpha}_s\right] \nonumber \\
\qquad +\sum_{(i,j,k)}\left[b^{li\beta}_r\left(\frac{\partial b^{jk\alpha}_s}{\partial u^l}-\frac{\partial b^{jk\alpha}_l}{\partial u^s}\right)\right] +\sum_{(i,j,k)}\left[b^{li\alpha}_s\left(\frac{\partial b^{jk\beta}_r}{\partial u^l}-\frac{\partial b^{jk\beta}_l}{\partial u^r}\right)\right]=0,\label{con1}
\end{gather}
where $\sum$ represents the cyclic summation over indices.
\end{Theorem}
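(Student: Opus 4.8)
The plan is to avoid computing the Schouten bracket $[P,P]$ directly and instead to exploit the equivalence between Hamiltonian structures and mPVAs stated above. First I would encode the operator \eqref{OP} in its defining $\lambda$-bracket on the generators, which by the correspondence between Hamiltonian operators and $\lambda$-brackets reads
\[
\big\{u^i_{\lambda} u^j\big\}=g^{ji\alpha}\lambda_\alpha+b^{ji\alpha}_k u^k_\alpha,
\]
a polynomial of degree one in the $\lambda_\alpha$ with coefficients in $\mathcal A$. The statement then splits into two independent requirements: that this bracket be skewsymmetric (equivalently, that $P$ be skewadjoint), and that it satisfy the PVA-Jacobi identity \eqref{PJ}.

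Skewadjointness is the easy half. Writing out the formal adjoint of the first-order operator \eqref{OP} gives $\big(P^{ij}\big)^{\ast}=-g^{ij\alpha}\partial_\alpha-\frac{\partial g^{ij\alpha}}{\partial u^k}u^k_\alpha+b^{ij\alpha}_k u^k_\alpha$, and imposing $\big(P^{ij}\big)^{\ast}=-P^{ji}$ while matching the coefficient of $\partial_\alpha$ and the zeroth-order term separately yields exactly the first two relations of the statement, namely $g^{ij\alpha}=g^{ji\alpha}$ and $\frac{\partial g^{ij\alpha}}{\partial u^k}=b^{ij\alpha}_k+b^{ji\alpha}_k$. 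These are the top-degree and constant parts of the PVA skewsymmetry axiom $\big\{u^i_\lambda u^j\big\}=-\big\{u^j_{-\lambda-\partial}u^i\big\}$.

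The bulk of the work lies in imposing \eqref{PJ}. Using the master formula \eqref{eq:master} I would expand the three nested brackets $\big\{u^i_\lambda\big\{u^j_\mu u^k\big\}\big\}$, $\big\{u^j_\mu\big\{u^i_\lambda u^k\big\}\big\}$ and $\big\{\big\{u^i_\lambda u^j\big\}_{\mu+\lambda}u^k\big\}$. Because the inner bracket already carries a factor $u^l_\alpha$, the outer application proceeds through the Leibniz rule and yields, after collecting, a polynomial of total degree two in $(\lambda,\mu)$ whose coefficients are differential polynomials built from $g$, $b$, $\partial g$, $\partial b$ and $b\cdot b$. The identity holds if and only if the coefficient of every independent monomial in $(\lambda,\mu)$ and in the jet variables vanishes, and these conditions split cleanly by total degree in $(\lambda,\mu)$: the degree-two coefficients (functions of $u$ alone, in which every $\partial g$ is first rewritten via the second skewadjointness relation) give the two relations $\sum_{(\alpha,\beta)}\big(g^{li\alpha}b^{jk\beta}_l-g^{lj\beta}b^{ik\alpha}_l\big)=0$ and $\sum_{(i,j,k)}\big(g^{li\alpha}b^{jk\beta}_l-g^{lj\beta}b^{ik\alpha}_l\big)=0$; the degree-one coefficients (linear in the first jets) give the two middle relations involving $g\,\partial b$ and $b\cdot b$; and the degree-zero coefficients, collected on the independent second-order jet data $u^k_\gamma u^l_\delta$ and $u^k_{\gamma\delta}$, give the final relation \eqref{con1}.

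The main obstacle is combinatorial rather than conceptual: carrying out the bookkeeping of the last step faithfully. The cyclic symmetrizations $\sum_{(i,j,k)}$ come from the commutator structure on the left-hand side of \eqref{PJ} together with skewsymmetry, while the $\sum_{(\alpha,\beta)}$ symmetrizations reflect the commutativity of the spatial derivatives, so that only the $(\alpha,\beta)$-symmetric part of the coefficient of $\lambda_\alpha\mu_\beta$ is actually constrained. Throughout I expect to use the first two relations to eliminate $b^{ji\alpha}_k$ in favour of $b^{ij\alpha}_k$ and $\partial g^{ij\alpha}/\partial u^k$, so that the surviving independent conditions collapse precisely onto those displayed; keeping the jet monomials $u^k_\gamma u^l_\delta$ and $u^k_{\gamma\delta}$ separate is what produces the several distinct groups of terms in \eqref{con1}.
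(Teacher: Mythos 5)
The paper does not actually prove this statement: it is recalled from Mokhov \cite{mokhov1988poisson} and used as given, so there is no internal proof to compare against. Your strategy is nonetheless sound, and it is precisely the method the paper uses for the analogous result it \emph{does} prove, namely the nonhomogeneous extension in Theorem~\ref{theor5}: encode \eqref{OP} as the $\lambda$-bracket $\big\lbrace u^i_{\lambda}u^j\big\rbrace=g^{ji\alpha}\lambda_\alpha+b^{ji\alpha}_ku^k_\alpha$, obtain the first two relations of \eqref{con1} from skewadjointness, and extract the remaining ones by equating to zero the coefficients of the independent monomials in $(\lambda,\mu)$ and the jet variables in the PVA-Jacobi identity \eqref{PJ}. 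Your assignment of the conditions to $(\lambda,\mu)$-degrees $2$, $1$ and $0$ is also the correct one.

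One bookkeeping slip is worth correcting before you carry out the computation: you claim that only the $(\alpha,\beta)$-symmetric part of the coefficient of $\lambda_\alpha\mu_\beta$ is constrained. Since $\lambda$ and $\mu$ are independent formal variables, the coefficient of $\lambda_\alpha\mu_\beta$ is constrained for \emph{every} pair $(\alpha,\beta)$, and this cross-term coefficient is the source of the cyclic-in-$(i,j,k)$ condition. It is instead the coefficients of $\lambda_\alpha\lambda_\beta$ and of $\mu_\alpha\mu_\beta$ --- produced, for instance, by the term $b^{kj\gamma}_l(\lambda_\gamma+\partial_\gamma)$ acting on $g^{li\beta}\lambda_\beta$ inside $\big\lbrace u^i_{\lambda}\big\lbrace u^j_{\mu}u^k\big\rbrace\big\rbrace$ --- that are constrained only in their $(\alpha,\beta)$-symmetric part, and these produce the $\sum_{(\alpha,\beta)}$ conditions. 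This does not invalidate your plan, since the full set of raw coefficient conditions is equivalent to Mokhov's repackaged list, but attaching the symmetrizations to the right monomials is exactly the combinatorial work you defer, so the slip would surface immediately once you attempt it.
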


In analogy to the $D=1$ case, the coefficients $g^{ij\alpha}$ are components of a symmetric (2,0)-tensor. Assuming that $g^\alpha$ is non-degenerate and defining $b^{ij\alpha}_k=-g^{il\alpha}\Gamma^{j\alpha}_{lk}$, we observe that $\Gamma^\alpha$ are Christoffels symbols of the Levi-Civita connection of $g^\alpha$.

\subsection {The nonhomogeneous case}
Following the content of the previous section, we extend the first-order homogeneous Hamiltonian operator \eqref{OP} to the nonhomogeneous operator of type 1+0
\begin{equation}\label{nonOP}
P^{ij}=g^{ij\alpha}(u)\partial_\alpha+b^{ij\alpha}_k(u)u^k_\alpha+\omega^{ij}(u).
\end{equation}
We need to find the conditions for which the operator \eqref{nonOP} is Hamiltonian, namely the brack\-et~${\lbrace F,G\rbrace=\int \frac{\delta F}{\delta u^i}P^{ij}\frac{\delta G}{\delta u^j}}$ satisfies the property of skewsymmetry and Jacobi identity.

Firstly, let us recall the condition for the 0-order ultralocal operator $\omega^{ij}(u)$ to be Hamiltonian.
\begin{Theorem}
The operator $\omega^{ij}(u)$ is Hamiltonian if and only if
\begin{gather}
\omega^{ij}=-\omega^{ji},\qquad
\omega^{il}\omega^{jk}_{,l}+\omega^{jl}\omega^{ki}_{,l}+\omega^{kl}\omega^{ij}_{,l}=0,\label{con2}
\end{gather}
where $\omega^{ij}_{,l}$ is a shorthand notation for $\frac{\partial \omega^{ij}}{\partial u^l}$.
\end{Theorem}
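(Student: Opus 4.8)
The plan is to invoke the equivalence between Hamiltonian structures and mPVAs recorded in the two theorems above, so that proving the claim reduces to checking two things on the generators: skewadjointness of the operator and the PVA-Jacobi identity \eqref{PJ}. Since $P^{ij}=\omega^{ij}(u)$ is purely ultralocal (order zero in $\partial$), the associated $\lambda$-bracket on generators is the $\lambda$-independent bracket $\{u^i_\lambda u^j\}=\omega^{ji}(u)$. This is precisely the feature that makes the computation short: in the master formula \eqref{eq:master} there are no positive powers of $\lambda+\partial$ to transport, so every bracket of a function of $u$ with a generator collapses to a plain product involving a single derivative $\omega^{\cdot\cdot}_{,l}$.

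First I would dispose of the skewsymmetry condition. A zeroth-order operator acts by multiplication and is therefore formally self-adjoint, so the adjoint of $P$ merely transposes the component indices, $(P^*)^{ij}=\omega^{ji}$. Hence the skewadjointness requirement $P^*=-P$ is literally $\omega^{ij}=-\omega^{ji}$, the first stated equation, and it is equivalent to skewsymmetry of the bracket.

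Next I would extract the Jacobi condition from \eqref{PJ}. Computing the inner bracket $\{u^j_\mu u^k\}=\omega^{kj}$ and applying the outer $\lambda$-bracket to the function $\omega^{kj}(u)$ through \eqref{eq:master} gives $\{u^i_\lambda\{u^j_\mu u^k\}\}=\omega^{kj}_{,l}\omega^{li}$, with no surviving $\lambda,\mu$-dependence; symmetrically $\{u^j_\mu\{u^i_\lambda u^k\}\}=\omega^{ki}_{,l}\omega^{lj}$. For the right-hand side I would compute $\{\{u^i_\lambda u^j\}_{\lambda+\mu}u^k\}=\omega^{kl}\omega^{ji}_{,l}$, again using that the inner bracket is a function of $u$ only so no derivative is carried along by the master formula. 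Substituting into \eqref{PJ} yields the single relation $\omega^{kj}_{,l}\omega^{li}-\omega^{ki}_{,l}\omega^{lj}=\omega^{kl}\omega^{ji}_{,l}$.

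Finally I would rewrite this relation using the already-established skewsymmetry $\omega^{ij}=-\omega^{ji}$ (hence $\omega^{ij}_{,l}=-\omega^{ji}_{,l}$), which converts each of the three terms into the corresponding summand of the cyclic expression and reproduces exactly $\omega^{il}\omega^{jk}_{,l}+\omega^{jl}\omega^{ki}_{,l}+\omega^{kl}\omega^{ij}_{,l}=0$, the second stated equation; the converse direction is automatic since every step is an equivalence. I do not expect a genuine obstacle here, as ultralocality removes all derivative bookkeeping. The only real care needed is in the index conventions and in the transpose relating $\omega^{ij}$ to $\{u^i_\lambda u^j\}=\omega^{ji}$, since it is this transpose together with skewsymmetry that fixes the precise cyclic arrangement of the final identity.
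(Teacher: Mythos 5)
Your proposal is correct: the master-formula computation collapses exactly as you describe, and the index bookkeeping with skewsymmetry does reproduce the cyclic identity $\omega^{il}\omega^{jk}_{,l}+\omega^{jl}\omega^{ki}_{,l}+\omega^{kl}\omega^{ij}_{,l}=0$. The paper itself gives no separate proof of this statement (it recalls it as the known finite-dimensional Poisson-bivector condition), but your argument is essentially the same PVA computation the paper carries out for the nonhomogeneous operator, where these ultralocal conditions arise as the degree-zero part of the PVA-Jacobi identity.
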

\begin{Remark}
The condition is the same as for a Poisson bivector on a finite-dimensional mani\-fold.
\end{Remark}

In the previous section, we have already stated the Hamiltonian condition for a first-order homogeneous operator. For the nonhomogeneous operator \eqref{nonOP}, the Hamiltonian condition is the following:
\begin{Theorem}[\cite{mokhov1995symplectic}]\label{theor5}
The operator \eqref{nonOP} is Hamiltonian if and only if the operator $g^{ij\alpha}\partial_\alpha+b^{ij\alpha}_ku^k_\alpha$ satisfies the conditions \eqref{con1}, $\omega^{ij}$ satisfies the conditions \eqref{con2}, and the following additional conditions hold true:
\begin{subequations}\label{con3}
\begin{align}
&T^{ikj\alpha} =T^{kji\alpha},\label{con3-1} \\
&T_{,s}^{kji\alpha} =\sum_{(i,j,k)}b_{s}^{li\alpha}\omega_{,l}^{kj}+\big(b_{s,l}^{ik\alpha}-b_{l,s}^{ik\alpha}\big)\omega^{lj}\label{con3-2},
\end{align}
\end{subequations}
where $T^{ikj\alpha}=g^{il\alpha}\omega^{kj}_{,l}-b^{ik\alpha}_l\omega^{lj}-b^{ij\alpha}_l\omega^{kl}$.
\end{Theorem}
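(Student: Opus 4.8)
The plan is to realize the operator \eqref{nonOP} as a $\lambda$-bracket, reduce the Hamiltonian property to the PVA--Jacobi identity \eqref{PJ}, and then separate that identity according to its homogeneity in the jet variables. First I would split $P=P_1+P_0$, where $P_1^{ij}=g^{ij\alpha}\partial_\alpha+b^{ij\alpha}_ku^k_\alpha$ is the first-order homogeneous part and $P_0^{ij}=\omega^{ij}$ is the ultralocal part. Skewadjointness $P^*=-P$ decouples by homogeneity: the degree-one part forces $g^{ij\alpha}=g^{ji\alpha}$ and $\partial g^{ij\alpha}/\partial u^k=b^{ij\alpha}_k+b^{ji\alpha}_k$ (the first two relations of \eqref{con1}), while the degree-zero part forces $\omega^{ij}=-\omega^{ji}$ (the first relation of \eqref{con2}). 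This disposes of the skewsymmetry half of the statement.

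Next, using the correspondence between Hamiltonian structures and $\lambda$-brackets, I would encode $P$ by $\{u^i_\lambda u^j\}=g^{ji\alpha}\lambda_\alpha+b^{ji\alpha}_ku^k_\alpha+\omega^{ji}$ and expand both sides of \eqref{PJ} with the master formula \eqref{eq:master}. The key observation is a grading: assigning degree one to each of $\lambda_\alpha$, $\mu_\alpha$ and $u^k_\alpha$, and degree zero to $u^k$, the bracket $\{u^i_\lambda u^j\}$ decomposes into a degree-one piece (from $P_1$) and a degree-zero piece (from $P_0$). Since the Schouten bracket is a graded bilinear operation, the trivector $[P,P]$ splits as $[P_1,P_1]+2[P_1,P_0]+[P_0,P_0]$, and the three summands live in distinct homogeneity degrees; hence \eqref{PJ} holds if and only if each summand vanishes separately.

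From here the three groups of conditions emerge degree by degree. The top-degree part $[P_1,P_1]=0$ involves only $g$ and $b$ and reproduces exactly the remaining relations in \eqref{con1} --- this is the statement that $P_1$ is a DN bracket, which I may quote from the homogeneous theorem. The bottom-degree part $[P_0,P_0]=0$ involves only $\omega$ and yields the finite-dimensional Jacobi relation $\omega^{il}\omega^{jk}_{,l}+\omega^{jl}\omega^{ki}_{,l}+\omega^{kl}\omega^{ij}_{,l}=0$ of \eqref{con2}. The genuinely new content is the mixed term $[P_1,P_0]=0$: here I would collect, in the expansion of \eqref{PJ}, the coefficients of the independent monomials in $(\lambda,\mu)$ and in the first jets $u^s_\alpha$ at the intermediate degree. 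The natural object organizing these coefficients is the tensor $T^{ikj\alpha}=g^{il\alpha}\omega^{kj}_{,l}-b^{ik\alpha}_l\omega^{lj}-b^{ij\alpha}_l\omega^{kl}$; matching the symmetry of the Jacobi expression under the exchange $(i,\lambda)\leftrightarrow(j,\mu)$ produces \eqref{con3-1}, while matching the genuinely differential (first-jet) coefficients produces \eqref{con3-2}.

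The main obstacle I anticipate is the bookkeeping of the mixed term. The master formula generates many contributions carrying $(\lambda+\partial)$ and $(-\lambda-\partial)$ shifts, and isolating the truly independent relations requires repeatedly invoking the skewadjointness identities and the homogeneous conditions \eqref{con1} already established, so that spurious terms cancel and the residual relations collapse into the compact tensorial form \eqref{con3}. In particular, verifying that the emerging symmetry is precisely $T^{ikj\alpha}=T^{kji\alpha}$, and that no further independent condition survives at the mixed degree, is the delicate point, since it is exactly the cancellation structure governed by the Levi-Civita interpretation of $b$ that makes the counting work out.
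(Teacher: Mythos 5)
Your proposal is correct and follows essentially the same route as the paper: encode $P$ as the $\lambda$-bracket $\{u^i_\lambda u^j\}=g^{ji\alpha}\lambda_\alpha+b^{ji\alpha}_ku^k_\alpha+\omega^{ji}$, expand the PVA--Jacobi identity \eqref{PJ} via the master formula, and read off \eqref{con3-1} from the coefficient of $\lambda_\alpha$ and \eqref{con3-2} from the coefficient of $u^s_\alpha$, the remaining terms reproducing \eqref{con1} and \eqref{con2}. Your explicit grading argument, splitting $[P,P]=[P_1,P_1]+2[P_1,P_0]+[P_0,P_0]$ by jet/$\lambda$ degree, is just a cleaner justification of the separation that the paper performs implicitly when it sorts the extracted terms.
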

\begin{proof}
Our proof is a direct computation. The $\lambda$-bracket corresponding to $P$ is $\big\lbrace u^i_{\lambda}u^j\big\rbrace=\smash{P^{ji}(\lambda)=g^{ji\alpha}\lambda_\alpha+b^{ji\alpha}_ku^k_\alpha+\omega^{ji}}$ by applying the theory of mPVA. We compute the PVA-Jacobi identity of the generators:
\begin{subequations}
\begin{gather}
\big\lbrace u^i_{\lambda}\big\lbrace u^j_{\mu} u^k\big\rbrace\big\rbrace=\sum\frac{\partial\big(g^{kj\alpha}\mu_\alpha+b^{kj\alpha}_hu^h_\alpha+\omega^{kj}\big)}{\partial u^l_{(M)}}(\lambda+\partial)^M\big(g^{li\beta}\lambda_\beta+b^{li\beta}_su^s_\beta+\omega^{li}\big),\label{Ja1}\\
\big\lbrace u^j_{\mu}\big\lbrace u^i_{\lambda} u^k\big\rbrace\big\rbrace=\sum\frac{\partial\big(g^{ki\alpha}\lambda_\alpha+b^{ki\alpha}_hu^h_\alpha+\omega^{ki}\big)}{\partial u^l_{(M)}}(\mu+\partial)^M\big(g^{lj\beta}\mu_\beta+b^{lj\beta}_su^s_\beta+\omega^{lj}\big),\label{Ja2}\\
\big\lbrace\big\lbrace u^i_{\lambda} u^j\big\rbrace_{\mu+\lambda}u^k\big\rbrace=\sum\big[g^{kl\beta} \big(\lambda_\beta+\mu_\beta+\partial_\beta\big)+b^{kl\beta}_hu^h_\beta+\omega^{kl}\big]\nonumber\\
\phantom{\big\lbrace\big\lbrace u^i_{\lambda} u^j\big\rbrace_{\mu+\lambda}u^k\big\rbrace=}{}\times(-\lambda-\mu-\partial)^M\frac{\partial\big(g^{ji\alpha} \lambda_\alpha+b^{ji\alpha}_su^s_\alpha+\omega^{ji}\big)}{\partial u^l_{(M)}},\label{Ja3}
\end{gather}
\end{subequations}
where the summation range is $1\leq h,s,l\leq N$, $1\leq\alpha,\beta\leq D$, and $M\in\mathbb{Z}^D_{\geq0}$. The equation $\eqref{Ja1}-\eqref{Ja2}=\eqref{Ja3}$ is PVA-Jacobi \eqref{PJ}, extracting the term $\lambda_\alpha$ leads to \eqref{con3-1}, and extracting the term $u^s_\alpha$ leads to \eqref{con3-2}. The remaining terms satisfy the Jacobi identity for the first-order homogeneous operator and the 0-order operator.
\end{proof}

\begin{Remark}
The conditions \eqref{con3-1} and \eqref{con3-2} are independent sets of equations for each spatial direction and are immediately generalized for any $D>0$; in particular, in the one-dimensional case and for nondegenerate $g$ they are equivalent to the local case of Ferapontov and Mokhov's results \cite{mokhov1994hamiltonian}; the local, multidimensional case was first investigated by Mokhov \cite{mokhov1995symplectic}.
\end{Remark}

\section[Quasi-linear systems and compatibility in the multidimensional case]{Quasi-linear systems and compatibility\\ in the multidimensional case}
\subsection{Quasi-linear systems}
In this section, we focus on $(D+1)$-dimensional first-order quasilinear systems $(D>1)$
\begin{equation}\label{sys}
u^i_t=V^{i\alpha}_l(u)u^l_\alpha+W^i(u), \qquad i=1,2,\dots, n.
\end{equation}
The systems \eqref{sys} (also referred to as the type 1+0) are the sum of 1-order homogeneous systems and 0-order systems. As an example, consider the $N$-waves system \cite{ablowitz1975nonlinear} in two dimensions:
\begin{gather}
u_{ij,t}=\alpha_{ij}u_{ij,x}+\beta_{ij}u_{ij,y}+\sum_{k\neq i,j}(\alpha_{ik}-\alpha_{kj})u_{ik}u_{kj},\qquad
\text{for}\quad i\neq j\quad \text{and}\quad u_{ii}=0, \nonumber \\
 i,j=1,\dots,N,\label{N-W}
\end{gather}
where $\alpha_{ij}=\alpha_{ji}\in\mathbb{R}$, $\beta_{ij}=\beta_{ji}\in\mathbb{R}$.

More in general, nonhomogeneous quasilinear systems can arise from scalar equations of arbitrary order. For example, consider the (2+1)-dimensional heat equation $u_t=u_{xx}+u_{yy}$. By performing coordinate transformations $u^1=u$, $u^2=u^1_x$, $u^3=u^1_y$ and interchanging variables $x$ with $t$ (or $y$ with $t$) \cite{tsarev1989hamilton}, we obtain the following system:
\begin{gather*}
u^1_t=u^2,\qquad
u^2_t=-u^1_x+u^3_y,\qquad
u^3_t=u^2_y.
\end{gather*}

\subsection{The cotangent covering}
The core content of this paper is to investigate the necessary conditions for the multidimensional quasilinear system \eqref{sys} to be the Hamiltonian system. Similarly to \cite{vergallo2023non}, we define these conditions as compatibility conditions and apply the theory of differential coverings \cite{krasil1989nonlocal} for computation. Since in \cite{vergallo2021homogeneous} the authors have already presented the procedure, we will provide a~brief overview of the theoretical framework in the multidimensional case.

Let us consider a multidimensional first-order quasilinear system
\begin{gather}\label{system19}
F^i=u^i_t-V^{i\alpha}_lu^l_\alpha-W^i=0.
\end{gather}
The vector function $\varphi=\varphi^i$ is said to be a \emph{symmetry} of the system \eqref{sys} if it satisfies the condition $l_F(\varphi)=0$, where $l_F$ is the Frechet derivative, expressed as
\begin{gather*}
(l_F)^i_j=\delta^i_jD_t-\big(V^{i\alpha}_{l,j}u^l_\alpha+W^i_{,j}\big)-V^{i\alpha}_jD_\alpha.
\end{gather*}
Moreover, the adjoint of $l_F$ denotes
\begin{gather*}
(l^*_F)^i_j=\big((l_F)^j_i\big)^*=-\delta^j_iD_t+\big(V^{j\alpha}_{i,l}u^l_\alpha-V^{j\alpha}_{l,i}u^l_\alpha-W^j_{,i}\big)+V^{j\alpha}_iD_\alpha.
\end{gather*}
The corresponding vector function $\psi=\psi_i=\frac{\delta b}{\delta u^i}$ such that $l^*_F(\psi)=0$ is called a \emph{cosymmetry} of the system \eqref{sys}.

We then introduce the so-called cotangent covering $T^*$ for the system~\eqref{system19}, which is the system $F=0$ and $l^*_F(p)=0$ where $p$ is an auxiliary odd dependent variable corresponding to a~cosymmetry. Explicitly, this is
\begin{gather}\label{cov}
u^i_t=V^{i\alpha}_lu^l_\alpha+W^i,\qquad
p_{i,t}=\big(V^{j\alpha}_{i,l}u^l_\alpha-V^{j\alpha}_{l,i}u^l_\alpha-W^j_{,i}\big)p_j+V^{j\alpha}_ip_{j,\alpha},
\end{gather}
where $p_{i,\alpha}=D_\alpha\psi_i$ and so on.

In \cite{kersten2004hamiltonian}, the following theorem is presented.
\begin{Theorem}
If the system \eqref{sys} admits the Hamiltonian formulation, then
\begin{gather}\label{eq:main-comp}
l_F\circ P=P^*\circ l_F^*,
\end{gather}
where the operator $P$ is a Hamiltonian structure for $F=0$.
\end{Theorem}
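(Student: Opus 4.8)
\emph{Proof proposal.} The plan is to rewrite the operator identity \eqref{eq:main-comp} as the statement that the Poisson bivector $P$ is invariant along the flow, and then to invoke the fact that a Hamiltonian flow preserves its own Poisson structure. Write the right-hand side of \eqref{sys} as $K^i=V^{i\alpha}_l u^l_\alpha+W^i$ and let $l_K$ be its Frechet derivative, with entries $(l_K)^i_j=V^{i\alpha}_{l,j}u^l_\alpha+W^i_{,j}+V^{i\alpha}_jD_\alpha$, so that $(l_F)^i_j=\delta^i_jD_t-(l_K)^i_j$, i.e.\ in operator form $l_F=I\,D_t-l_K$ with $I$ the identity matrix. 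Since $D_t^*=-D_t$, taking adjoints gives $l_F^*=-I\,D_t-l_K^*$. Recall that, as a Hamiltonian structure (cf.\ \eqref{eq:pb-def}), $P$ is skewadjoint, $P^*=-P$, and is a purely spatial operator, so it commutes with $D_t$ up to the operator $[D_t,P]$ obtained by letting $D_t$ act on the (differential-polynomial, $t$-independent) coefficients of $P$.

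First I would expand both sides of \eqref{eq:main-comp} using these splittings. The right-hand side becomes $P^*\circ l_F^*=(-P)(-I\,D_t-l_K^*)=P\circ D_t+P\circ l_K^*$, while the left-hand side is $l_F\circ P=D_t\circ P-l_K\circ P=P\circ D_t+[D_t,P]-l_K\circ P$. The terms $P\circ D_t$ cancel, so that \eqref{eq:main-comp} is equivalent to
\begin{equation}\label{eq:reduction}
[D_t,P]=l_K\circ P+P\circ l_K^*.
\end{equation}
On the equation $u^i_t=K^i$, the total time derivative of the coefficients of $P$ equals their directional derivative along the flow, i.e.\ the coefficientwise action of the prolongation $\mathbf{v}_K$ of the evolutionary vector field with characteristic $K$. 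Using the standard expression for the Lie derivative of a skewadjoint operator, $\mathbf{v}_K(P)=[D_t,P]-l_K\circ P-P\circ l_K^*$ (understood on the prolongation of $F=0$), condition \eqref{eq:reduction}, and therefore \eqref{eq:main-comp}, is precisely the invariance $\mathbf{v}_K(P)=0$.

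It then remains to show that a Hamiltonian flow preserves its Poisson bivector. By hypothesis the system admits a Hamiltonian formulation, $K^i=P^{ij}\frac{\delta H}{\delta u^j}$, so $\mathbf{v}_K$ is the Hamiltonian vector field generated by $H$ through $P$; in the variational Schouten-bracket language this reads $\mathbf{v}_K(P)=[[H,P],P]$. The graded Jacobi identity for the Schouten bracket rewrites this as $\mathbf{v}_K(P)=\tfrac12[H,[P,P]]$, which vanishes because $P$ is a Hamiltonian structure, $[P,P]=0$. Equivalently, this invariance is the infinitesimal content of the PVA-Jacobi identity \eqref{PJ} for the $\lambda$-bracket associated with $P$. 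Tracing the equivalences back through \eqref{eq:reduction} yields \eqref{eq:main-comp}.

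The main obstacle is the careful on-shell bookkeeping in the reduction to \eqref{eq:reduction}: one must keep track of the single $D_t$ appearing in $l_F$, verify the Lie-derivative formula for a skewadjoint \emph{matrix} operator with the correct placement of $l_K$ and $l_K^*$, and use the identification of $[D_t,P]$ with the prolongation action only modulo $F=0$ and its differential consequences, where \eqref{eq:main-comp} is meant to hold. Once this translation is in place, the conclusion is a formal consequence of $[P,P]=0$.
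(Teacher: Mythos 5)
Your argument is correct, but it is not the paper's argument, because the paper contains no proof of this theorem at all: it is quoted from \cite{kersten2004hamiltonian} (where the proof runs through the $\ell^*$-covering formalism), with only the remark that that proof is insensitive to the number of spatial variables. What you have written is a self-contained proof whose first half reconstructs, ahead of time, the paper's own Proposition in Section~\ref{sec:Pois}: your reduction of \eqref{eq:main-comp} to $[D_t,P]=l_K\circ P+P\circ l_K^*$, read on the prolongation of $F=0$ as $\mathrm{pr}\,\mathbf{v}_K(P)-l_K\circ P-P\circ l_K^*=0$, is exactly the identity \eqref{eq:proof-comp} that the paper later derives by expanding coefficients and matching the PVA master formula, i.e.\ the statement that the flow is a $1$-cocycle of the Poisson--Lichnerowicz complex. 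You reach it by operator algebra (splitting off $D_t$, skewadjointness of $P$, cancellation of $P\circ D_t$), which is tidier than the coefficient expansion; the paper's computation, in exchange, produces the explicit $\lambda$-bracket form \eqref{eq:proof-symPVA} that it needs for the cohomological discussion. Your second half --- that a Hamiltonian field $X_H=P(\delta H)$ is automatically a symmetry, via $\mathcal{L}_{X_H}P=[P,[P,H]]=\tfrac12\,[[P,P],H]=0$ --- is the standard coboundary-implies-cocycle argument and is what actually closes the proof of the theorem; the paper never carries out this step for the theorem itself (having deferred to the literature), though it states the same facts in Section~\ref{sec:Pois} when explaining why \eqref{eq:comp-V} is necessary but not sufficient. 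Two minor points to tighten, neither affecting correctness: you overload the symbol $\mathbf{v}_K(P)$, using it both for the coefficientwise action of the prolongation and for the full Lie derivative (which also contains the $l_K$, $l_K^*$ terms), so fix one meaning; and the Schouten-bracket signs ($X_H=\pm[P,H]$, the graded Jacobi identity) depend on conventions that should be stated once, even though the vanishing conclusion is convention-independent.
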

\begin{Remark}
\eqref{eq:main-comp} is necessary but not sufficient for \eqref{sys} to admit a Hamiltonian formulation.
\end{Remark}
The proof in \cite{kersten2004hamiltonian} is independent from the dimension of $X$, namely on the number of spatial variables, so it is not only true for one-dimensional systems but also in our case. Therefore, the following result immediately follows.
\begin{Proposition}
If \eqref{sys} is a Hamiltonian system with Hamiltonian structure \smash{$P^{ij}=P^{ij}_S\partial^S$}, then
\begin{gather}\label{eq:comp-V}
(l_F)^i_jP^{jl}(p_l)=0
\end{gather}
for $(u,p)$ the cotangent covering of $F$.
\end{Proposition}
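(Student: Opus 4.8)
The plan is to derive \eqref{eq:comp-V} as a direct consequence of the operator identity \eqref{eq:main-comp} established in the preceding theorem. First I would recall that the cotangent covering is precisely the pair of equations $F=0$ together with $l_F^*(p)=0$, so that on the covering the auxiliary variable $p$ satisfies, by construction, the cosymmetry equation $l_F^*(p)=0$. This is the key structural fact that I would exploit.

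Next I would apply the Hamiltonian operator $P$ to this relation. Since $P$ is a Hamiltonian structure for $F=0$, it is skewadjoint, so $P^*=-P$. Applying $P^*$ to the vanishing quantity $l_F^*(p)$ gives $P^*\bigl(l_F^*(p)\bigr)=0$ trivially. The crucial step is then to invoke the compatibility identity \eqref{eq:main-comp}, namely $l_F\circ P=P^*\circ l_F^*$, read as an identity of operators. Evaluating both sides on $p$ yields
\begin{gather*}
(l_F\circ P)(p)=\bigl(P^*\circ l_F^*\bigr)(p)=P^*\bigl(l_F^*(p)\bigr)=0,
\end{gather*}
where the last equality holds because $p$ lies on the cotangent covering and hence $l_F^*(p)=0$. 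Writing this out in components, with $P^{jl}$ the components of the operator, gives exactly $(l_F)^i_j P^{jl}(p_l)=0$, which is \eqref{eq:comp-V}.

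I would emphasize that the argument requires no new computation beyond the bookkeeping of indices; it is a formal composition of two facts already in hand, the operator identity \eqref{eq:main-comp} and the defining property of the covering. The one point deserving care is to verify that the composition $l_F\circ P$ makes sense when applied to $p$ on the covering, i.e.\ that the prolongation of $P(p)$ remains in the algebra of functions on $T^*$ so that $l_F$ can act on it; this is immediate since $P$ is a differential operator in the spatial variables and the covering equations express all $t$-derivatives of $u$ and $p$ in terms of spatial jets. Thus the main (and only mild) obstacle is notational: ensuring the skewadjointness $P^*=-P$ and the index placement in \eqref{eq:main-comp} are applied consistently, so that the contraction with $p_l$ produces the stated component form \eqref{eq:comp-V}.
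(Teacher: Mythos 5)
Your proposal is correct and follows exactly the paper's own argument: apply the operator identity \eqref{eq:main-comp} to $p$, note that the right-hand side $P^*\bigl(l_F^*(p)\bigr)$ vanishes because $l_F^*(p)=0$ on the cotangent covering, and identify the left-hand side with \eqref{eq:comp-V}. The only superfluous element is the appeal to skewadjointness $P^*=-P$, which is not needed since any operator annihilates the zero argument.
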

\begin{proof}
If we apply the identity \eqref{eq:main-comp} to $p$, then the right-hand side vanishes and the left-hand side is~\eqref{eq:comp-V}.
\end{proof}

We call \eqref{eq:comp-V} the compatibility condition. It serves as a necessary condition (but not sufficient) for the multidimensional quasilinear system \eqref{sys} to be Hamiltonian. The condition \eqref{eq:comp-V} cannot guarantee that the operator~$P$ is Hamiltonian nor it does ensure that, even in the case when $P$ is a Hamiltonian operator, the system $F$ is Hamiltonian.

However, by computing the condition \eqref{eq:comp-V}, we can exclude certain operators. In other words, for a given quasilinear system, when searching for possible Hamiltonian structures, all the operators (Hamiltonian or not) that do not satisfy \eqref{eq:comp-V} should be discharged. Moreover, for a given operator satisfying the Hamiltonian conditions, it is also possible to search for compatible systems by computing condition \eqref{eq:comp-V}.

\subsection{Compatibility in the multidimensional case}
The next step is to compute the compatibility conditions. We assume that the operator $P$ is Hamiltonian and consider the Hamiltonian formulation of the first-order quasilinear system
\begin{gather*}
u^i_t=V^{i\alpha}_lu^l_\alpha+W^i=P^{ij}\frac{\delta H}{\delta u^j}.
\end{gather*}

Since we consider quasi-linear systems, we assume that $P$ is at most of differential order~1, and consequently assume that $H$ depends only on the variables $u$ (and not on higher-order jet variables). For a multidimensional nonhomogeneous quasilinear system of type 1+0, the corresponding Hamiltonian structure must be the form of $P=A+\omega$, where $\operatorname{deg}(A)=1$ and~$\operatorname{deg}(\omega)$=0. Similarly, the 1-order homogeneous quasilinear system corresponds to the homogeneous operator $P$ such that $\operatorname{deg}(P)$= 1, and the 0-order quasilinear system corresponds to the 0-degree operator.

We directly consider the compatibility conditions of (1+0)-order nonhomogeneous systems
\begin{equation}\label{nonsys}
u^i_t=V^{i\alpha}_lu^l_\alpha+W^i, \qquad i=1,2,\dots, n,
\end{equation}
which admit nonhomogeneous operators (satisfying the conditions \eqref{con1}, \eqref{con2} and \eqref{con3})
\begin{equation}\label{HS}
P^{ij}=g^{ij\alpha}\partial_\alpha+b^{ij\alpha}_ku^k_\alpha+\omega^{ij}
\end{equation}
as their Hamiltonian structures.
\begin{Theorem}\label{theorem8}
If a nonhomogeneous quasilinear system \eqref{nonsys} admits a Hamiltonian formulation with a Hamiltonian structure \eqref{HS}, the following conditions need to be satisfied:
\begin{subequations}
\begin{gather}
\omega^{ij}_{,s}W^s-\omega^{sj}W^i_{,s}-\omega^{is}W^j_{,s}=0,\label{syscon1}\\
\sum_{(\alpha,\beta)}\big(V^{j\alpha}_sg^{is\beta}-V^{i\alpha}_sg^{sj\beta}\big)=0,\label{syscon2}\\
g^{ij\beta}_{,s}V^{s\alpha}_l+g^{is\beta}\big(V^{j\alpha}_{s,l}\!-V^{j\alpha}_{l,s}\big)+g^{is\alpha}V^{j\beta}_{s,l}
+b^{is\alpha}_lV^{j\beta}_s-g^{sj\beta}V^{i\alpha}_{l,s}-g^{sj\beta}_{,l}V^{i\alpha}_s-b^{sj\alpha}_lV^{i\beta}_s
=0,\!\!\!\label{syscon3}\\
\sum_{(\alpha,\beta)}\big[g^{is\beta}\big(V^{j\alpha}_{s,l}-V^{j\alpha}_{l,s}\big)+b^{ij\beta}_sV^{s\alpha}_l-V^{i\alpha}_sb^{sj\beta}_l\big]=0,\label{syscon4}\\
g^{is\beta}\big(V^{j\alpha}_{s,lk}-V^{j\alpha}_{l,sk}\big)+g^{is\alpha}\big(V^{j\beta}_{s,kl}-V^{j\beta}_{k,sl}\big)+b^{ij\beta}_{k,h}V^{h\alpha}_l+b^{ij\alpha}_{l,h}V^{h\beta}_k +b^{ij\beta}_hV^{h\alpha}_{l,k}\notag \\
\qquad{}+b^{ij\alpha}_hV^{h\beta}_{k,l}+b^{is\beta}_k\big(V^{j\alpha}_{s,l}
-V^{j\alpha}_{l,s}\big)+b^{is\alpha}_l\big(V^{j\beta}_{s,k}-V^{j\beta}_{k,s}\big) -b^{hj\beta}_kV^{i\alpha}_{l,h}\notag \\
\qquad{}
-b^{hj\alpha}_lV^{i\beta}_{k,h}-b^{hj\beta}_{k,l}V^{i\alpha}_h-b^{hj\alpha}_{l,k}V^{i\beta}_h=0,\label{syscon5}\\
g^{ij\alpha}_{,s}W^s-g^{is\alpha}W^j_{,s}-W^i_{,s}g^{sj\alpha}+\omega^{is}V^{j\alpha}_s-V^{i\alpha}_s\omega^{sj}=0,\label{syscon6}\\
-g^{is\alpha}W^j_{,sl}+b^{ij\alpha}_{l,s}W^s+b^{ij\alpha}_sW^s_{,l}-b^{is\alpha}_lW^j_{,s}-W^i_{,s}b^{sj\alpha}_l \notag \\
\qquad{}+\omega^{ij}_{,s}V^{s\alpha}_l+\omega^{is}\big(V^{j\alpha}_{s,l}-V^{j\alpha}_{l,s}\big)-V^{i\alpha}_{l,s}\omega^{sj}-V^{i\alpha}_s\omega^{sj}_{,l}=0,\label{syscon7}
\end{gather}
\end{subequations}
where $\sum$ represents the cyclic summation over indices.
\end{Theorem}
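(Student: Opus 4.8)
The plan is to invoke the necessary condition \eqref{eq:comp-V} from the preceding Proposition---that $(l_F)^i_j P^{jl}(p_l)=0$ identically on the cotangent covering \eqref{cov}---and to extract the seven stated identities as the vanishing of the coefficient of each independent jet monomial in that single vector equation. Thus the whole argument is one structured expansion rather than seven separate computations.

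First I would substitute the explicit operators. Writing $P^{jl}(p_l)=g^{jl\alpha}p_{l,\alpha}+b^{jl\alpha}_k u^k_\alpha p_l+\omega^{jl}p_l$ and then applying $(l_F)^i_j=\delta^i_j D_t-\bigl(V^{i\alpha}_{l,j}u^l_\alpha+W^i_{,j}\bigr)-V^{i\alpha}_j D_\alpha$, the only delicate piece is the $D_t$ term. It generates $u^m_t$, $u^k_{\alpha t}=D_\alpha(u^k_t)$, $p_{l,t}$ and $p_{l,\alpha t}=D_\alpha(p_{l,t})$, every one of which I eliminate using the covering relations \eqref{cov}, i.e. $u^m_t=V^{m\gamma}_s u^s_\gamma+W^m$ and $p_{l,t}=\bigl(V^{j\gamma}_{l,s}u^s_\gamma-V^{j\gamma}_{s,l}u^s_\gamma-W^j_{,l}\bigr)p_j+V^{j\gamma}_l p_{j,\gamma}$, together with their total spatial derivatives. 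Once this is carried out no $t$-derivative survives, and the expression becomes a differential polynomial that is linear in the odd variables $p_i,p_{i,\alpha},p_{i,\alpha\beta}$ with coefficients that are differential polynomials in the spatial jets $u^i,u^i_\alpha,u^i_{\alpha\beta}$.

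Because all these jet coordinates are algebraically independent, the identity holds if and only if the coefficient of each distinct monomial vanishes. The composite operator has total differential order two (one from $P$, one from the covering-substituted $D_t$ or from $V^{i\alpha}_j D_\alpha$), so only monomials of combined $u$- and $p$-derivative order at most two occur; these fall into exactly seven classes, which I would collect in decreasing order. The top symbol $p_{j,\alpha\beta}$ receives contributions only from $g^{il\alpha}D_\alpha\bigl(V^{j\beta}_l p_{j,\beta}\bigr)$ and from $-V^{i\alpha}_j D_\alpha\bigl(g^{jl\beta}p_{l,\beta}\bigr)$, and its coefficient, symmetric in $\alpha\beta$ because $p_{j,\alpha\beta}$ is, is precisely \eqref{syscon2}. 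The coefficient of $u^l_\beta p_{j,\alpha}$ (unsymmetrised, since $\alpha$ sits on $p$ and $\beta$ on $u$) gives \eqref{syscon3}; that of the symmetric monomial $u^l_{\alpha\beta}p_j$ gives \eqref{syscon4}; that of the quadratic monomial $u^l_\alpha u^k_\beta p_j$ gives \eqref{syscon5}, which is automatically symmetric under the simultaneous swap $(l,\alpha)\leftrightarrow(k,\beta)$; the plain $p_{j,\alpha}$ coefficient gives \eqref{syscon6}; the $u^l_\alpha p_j$ coefficient gives \eqref{syscon7}; and the jet-free $p_j$ coefficient, which couples only $\omega$ with $W$, gives \eqref{syscon1}.

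I expect the only real obstacle to be the bookkeeping inside the $D_t$ term: forming $D_\alpha(p_{l,t})$ and $D_\alpha(u^k_t)$ correctly and tracking how the Leibniz rule scatters a single covering relation across several monomial classes, so that the symmetrisation patterns $\sum_{(\alpha,\beta)}$ in \eqref{syscon2} and \eqref{syscon4} and the built-in symmetry of \eqref{syscon5} emerge exactly right and are not conflated with the unsymmetrised \eqref{syscon3}. Since the whole expression is strictly linear in $p$, the odd (Grassmann) nature of $p$ produces no sign subtleties, so no extra care is needed on that account.
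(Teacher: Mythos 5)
Your proposal is correct and follows essentially the same route as the paper: both compute $l_F(P(p))=0$ on the cotangent covering \eqref{cov}, eliminate the $t$-derivatives via the covering relations, and extract the seven conditions as the vanishing coefficients of the independent monomials $p_j$; $p_{j,\alpha}$, $u^l_\alpha p_j$; $p_{j,\alpha\beta}$, $u^l_\alpha p_{j,\beta}$, $u^l_{\alpha\beta}p_j$, $u^l_\alpha u^k_\beta p_j$, with exactly the same monomial-to-condition correspondence. Your remarks on the symmetrisation in $(\alpha,\beta)$ merely make explicit what the paper leaves implicit.
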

\begin{proof}
We note that the cotangent covering of the system \eqref{nonsys} is \eqref{cov}. In order to obtain the necessary conditions, it is required to compute $l_F(P(p))=0$,
\begin{gather*}
l_F(P(p))=\big(\delta_{ij}D_t-V^{i\alpha}_{l,j}-W^i_{,j}-V^{i\alpha}_jD_\alpha\big)\big(g^{js\beta}p_{s,\beta}+b^{js\beta}_ku^k_\beta p_s+\omega^{js}p_s\big).
\end{gather*}
By setting the degree of $p_i$ to 0 and extracting the terms of degree 0, namely the term $p_j$, we can obtain \eqref{syscon1}. Indeed, this is also the compatibility condition of the 0-order system~${u^i_t=W^i(u)}$ with respect to the 0-order ultralocal Hamiltonian operator. Continuing with the extraction of terms of degree 2, the terms $p_{j,\alpha\beta}$, $u^l_\alpha p_{j,\beta}$, $u^l_{\alpha\beta} p_j$, and $u^l_\alpha u^k_\beta p_j$ correspond to \eqref{syscon2}--\eqref{syscon4}, and~\eqref{syscon5}, respectively. Similarly, these conditions are the compatibility conditions for first-order homogeneous systems admitting hydrodynamic-type operators as their Hamiltonian structures. Finally, extracting the remaining terms $p_{j,\alpha}$, $u^l_\alpha p_j$, we can obtain \eqref{syscon6}--\eqref{syscon7}.
\end{proof}

\begin{Remark}
Assuming that $\omega$ is non-degenerate, then \eqref{syscon1} is equivalent to $\tilde{\nabla}^iW^j=\tilde{\nabla}^jW^i$ (this is both the notation and the result of \cite{vergallo2023non}). \eqref{syscon4} contains three sets of conditions (one for $\alpha=\beta=x$, one for $\alpha=\beta=y$, and one for $\alpha=x$, $\beta=y$). From this observation, we can reformulate \eqref{syscon4} as three ``one-dimensional'' conditions
\begin{gather*}
g^{is(\kappa)}\big(V^{j(\kappa)}_{s,l}-V^{j(\kappa)}_{l,s}\big)+b^{ij(\kappa)}_sV^{s(\kappa)}_l-V^{i(\kappa)}_sb^{sj(\kappa)}_l=0,\qquad(\kappa)=1,2,3,
\end{gather*}
for \big(\smash{$g^{(1)}=g^x$}, \smash{$b^{(1)}=b^x$}, \smash{$V^{(1)}=V^x$}\big), \big(\smash{$g^{(2)}=g^y$}, \smash{$b^{(2)}=b^y$}, \smash{$V^{(2)}=V^y$}\big) and \big($g^{(3)}=g^x+g^y$, \smash{$b^{(3)}=b^x+b^y$}, \smash{$V^{(3)}=V^x+V^y$}\big). Therefore, if $g^x$, $g^y$ \emph{and} $g^x+g^y$ are all non-degenerate, \eqref{syscon4} can be written as \smash{$\nabla^{(\kappa)}_i V^{j(\kappa)}_k=\nabla^{(\kappa)}_k V^{j(\kappa)}_i$}.
\end{Remark}

\subsection{Compatibility and Poisson cohomology}\label{sec:Pois}
It is possible to offer a different interpretation of the property \eqref{eq:main-comp}, which clearly highlights the relation between sufficiency and necessity of the compatibility condition~\eqref{eq:comp-V}. Note that in this paragraph we denote both the Poisson bivector and the Hamiltonian operator defining it with the same letter~$P$, to keep a consistent notation with the previous formulae. The notion of Poisson cohomology was first introduced by Lichnerowicz \cite{lichnerowicz1977varietes} in the finite-dimensional setting; it has become a fundamental tool in the study of infinite-dimensional Hamiltonian systems, where the Poisson bivector and the Hamiltonian operator can be identified (see, for instance, \cite{dubrovin2001normal,getzler}). Let us briefly recall the notion of Poisson cohomology for a Hamiltonian operator.

Given a Poisson bivector $P\in\Lambda^2$, the condition $[P,P]=0$ guarantees, by the properties of the Schouten bracket, that its adjoint action on poly-vector fields ${\rm d}_P=[P,\cdot]\colon\Lambda^k\to\Lambda^{k+1}$ squares to 0, turning the space of poly-vector fields into a cochain complex known as the \emph{Poisson--Lichnerowicz complex}. The linear operator ${\rm d}_P$ is therefore called the \emph{Poisson differential},
\begin{gather*}
0\to\Lambda^0\xrightarrow{{\rm d}_P}\Lambda^1\xrightarrow{{\rm d}_P}\Lambda^2\xrightarrow{{\rm d}_P}\cdots\xrightarrow{{\rm d}_P}\Lambda^p\xrightarrow{{\rm d}_P}\cdots.
\end{gather*}
Note that, in the infinite-dimensional case, the Poisson--Lichnerowicz complex does not terminate, while $0$-vector fields are identified with local functionals. The cohomology of such complex is referred to as the \emph{Poisson cohomology}:
\begin{gather*}
H(P)=\bigoplus\limits_{p\geq 0}H^p(P)=\bigoplus\limits_{p\geq 0}\frac{\operatorname{Ker}{\rm d}_P\colon\Lambda^p\to\Lambda^{p+1}}{\operatorname{Im} {\rm d}_P\colon\Lambda^{p-1}\to\Lambda^p}=\bigoplus\limits_{p\geq 0}\frac{Z^p}{B^p}.
\end{gather*}
As in any cochain complex, the elements of $Z^p$ are called cocycles, while the elements of $B^p$ are called coboundaries. We also introduce a second grading on $H^p$ (resp.\ $Z^p$, $B^p$, $\Lambda^p$), corresponding to the differential order of the elements, by the rule $\deg u=0$, $\deg D_\alpha=1$, $\deg u^i_Q=q_1+\cdots+q_D$. The corresponding homogeneous components of $H^p$ (resp.\ for $Z,B,\ldots$) are denoted by $H^p_d$, $\big(Z^p_d,\ldots\big)$, $d\geq 0$.

A vector field that preserves the Hamiltonian structure (in the sense $\{X(F),G\}+\{F,X(G)\}=X(\{F,G\})$, where $\{\cdot,\cdot\}$ is the Poisson bracket defined in~\eqref{eq:pb-def}) is called a \emph{symmetry} of the structure and it is a $1$-cocycle. Indeed, the condition can equivalently be written as $\mathcal{L}_X(P)=0$ and the Lie derivative of~$P$ along the vector field~$X$ is, by the definition of Schouten bracket,~$-{\rm d}_PX$. On the other hand, a Hamiltonian vector field is obtained by the adjoint action of~$P$ on a local functional (the \emph{Hamiltonian functional}), $X_H=P(\delta H)=-[P,H]$, and it is therefore a $1$-coboundary. The elements of the first cohomology group $H^1(P)$ are the so-called \emph{non-Hamiltonian symmetries}. They are vector fields that preserve the Hamiltonian structure without being Hamiltonian.

\begin{Proposition}
Let $X$ be the evolutionary vector field corresponding to the system $F^i=u^i_t-X^i=0$, where $\{X^i\}$ is the characteristics of $X$, and $P$ the Poisson bivector defined by a local Hamiltonian operator. Then condition \eqref{eq:main-comp} is equivalent to $X$ being a $1$-cocycle in the Poisson--Lichnerowicz complex of $P$ $($namely, $X$ is a symmetry of $P)$.
\end{Proposition}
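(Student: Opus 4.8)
The plan is to substitute the explicit form $F^i=u^i_t-X^i$ into the compatibility identity \eqref{eq:main-comp} and show that, once the time-derivative terms cancel, it collapses exactly to the operator realization of the condition $\mathcal{L}_X(P)=0$. First I would record the Frechet derivative of $F$ and its adjoint in terms of the linearization $l_X$ of the characteristic, $(l_X)^i_j=\sum_S\frac{\partial X^i}{\partial u^j_S}\partial^S$: one has $(l_F)^i_j=\delta^i_jD_t-(l_X)^i_j$ and, taking adjoints (using $D_t^*=-D_t$), $(l_F^*)^i_j=-\delta^i_jD_t-(l_X^*)^i_j$, in agreement with the formulas written before \eqref{cov}. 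Since $P$ is a Hamiltonian operator it is skewadjoint, $P^*=-P$.

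Substituting these into $l_F\circ P=P^*\circ l_F^*$ turns \eqref{eq:main-comp} into
\[
(D_t-l_X)\circ P=(-P)\circ(-D_t-l_X^*)=P\circ D_t+P\circ l_X^*.
\]
The next step is to handle the terms containing $D_t$. On the cotangent covering the evolution $u^i_t=X^i$ is imposed, so the total time derivative acts on any function of the $u$-jets as the evolutionary derivation $X$; since $t$-derivatives commute with the spatial $\partial^S$, this yields $D_t\circ P=X(P)+P\circ D_t$, where $X(P)$ denotes the operator obtained by applying $X$ to the coefficients $P^{ij}_S$ of $P$ (this contains no $D_t$). The two occurrences of $P\circ D_t$ then cancel, and the identity reduces to
\[
X(P)-l_X\circ P-P\circ l_X^*=0.
\]

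It then remains to recognize the left-hand side as the operator realization of the Lie derivative $\mathcal{L}_X(P)$, i.e.\ Olver's invariance condition for a Hamiltonian operator \cite{olver1993applications} (up to an overall sign, which is irrelevant for the vanishing). By the discussion preceding the statement, $\mathcal{L}_X(P)=-{\rm d}_PX$, so the displayed identity is precisely ${\rm d}_PX=0$, that is, $X\in Z^1$ is a $1$-cocycle in the Poisson--Lichnerowicz complex. Because every manipulation above is a reversible rewriting of operator identities, the equivalence holds in both directions.

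The main obstacle is this middle step: one must justify carefully, with the correct adjoint and multi-index bookkeeping, both the replacement of $D_t$ by the evolutionary derivation on the covering and the identification of $X(P)-l_X\circ P-P\circ l_X^*$ with $\mathcal{L}_X(P)$. A clean consistency check, and the surest way to confirm that no stray term has been dropped in the cancellation, is to verify that this combination is automatically skewadjoint whenever $P$ is, matching the fact that $\mathcal{L}_X(P)$ is again a bivector.
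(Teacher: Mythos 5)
Your proof is correct, and its computational core coincides with the paper's: both substitute $l_F=D_t-l_X$, $l_F^*=-D_t-l_X^*$, $P^*=-P$ into \eqref{eq:main-comp}, use the fact that on the covering the time derivative of any $u$-expression (in particular of the coefficients $P^{ij}_S$) is the evolutionary action of $X$, and cancel the two remaining $P\circ D_t$ terms using skewadjointness of $P$ and $[D_\alpha,D_t]=0$; this is exactly what the paper does, written out in multi-indices, in passing from \eqref{eq:proof-S1} and \eqref{eq:proof-S2} to \eqref{eq:proof-comp}. The one genuine difference is how the residual identity is recognized as the cocycle condition. The paper does not invoke a ready-made formula for $\mathcal{L}_X(P)$: it computes ${\rm d}_PX=0$ from scratch via the PVA master formula \eqref{eq:master}, obtaining \eqref{eq:proof-symPVA}, and then matches it with \eqref{eq:proof-comp} through the bookkeeping substitution $D^S=(\lambda+\partial)^S$ (derivatives landing on $\psi_i$ become $\lambda$'s, those landing on $u$-expressions become $\partial$'s). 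You instead identify $X(P)-l_X\circ P-P\circ l_X^*$ with the operator realization of $\mathcal{L}_X(P)$ by citing Olver's invariance criterion \cite{olver1993applications}. That identification is correct and standard, so your argument closes; but note that in the multidimensional PVA framework used here, the verification of precisely that formula (and of its agreement with $-{\rm d}_PX$ as defined through the Schouten bracket and the $\lambda$-bracket calculus of \cite{casati2015deformations}) is the content of the paper's master-formula step, so your route is more compact at the price of delegating to a citation the piece of work the paper carries out explicitly. Your observations that all manipulations are reversible (so one gets equivalence, not a single implication) and that skewadjointness of $X(P)-l_X\circ P-P\circ l_X^*$ provides a consistency check are both sound.
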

\begin{proof}
This observation is essentially computational. First, let us explicitly write condition~\eqref{eq:main-comp} in terms of $F^i=u^i_t-X^i$ and $P=P^{ij}_S\partial^S$. For any covector $\psi=\{\psi_i\}$, we have
\begin{align}
(l_F(P\psi))^j&=D_t\big(P^{ji}_SD^S\psi_i\big)-\frac{\partial X^j}{\partial u^m_Q}D^Q\big(P^{mi}_SD^S\psi_i\big)\nonumber\\
&=\frac{\partial P^{ji}_S}{\partial u^m_Q}\big(D^Q X^m\big)D^S\psi_i+P^{ji}_SD_tD^S\psi_i-\frac{\partial X^j}{\partial u^m_Q}D^Q\big(P^{mi}_SD^S\psi_i\big),\label{eq:proof-S1}
\end{align}
where we use the multi-index notation $D^S$ to denote $D_1^{s_1}D_2^{s_2}\cdots D^{s_D}_{D}$.
Similarly,
\begin{align}
(P^*(l_F^*\psi))^j&=(-D)^S\left(P^{ij}_S\left(-D_t\psi_i-(-D)^Q\frac{\partial X^m}{\partial u^i_Q}\psi_m\right)\right)\nonumber\\
&=-\big[(-D)^SP^{ij}D_t\psi_i\big]-(-D)^S\left[ P^{mj}_S(-D)^Q\left(\frac{\partial X^i}{\partial u^m_Q}\psi_i\right)\right].\label{eq:proof-S2}
\end{align}
Now we subtract \eqref{eq:proof-S1} from \eqref{eq:proof-S2}, utilize the commutativity between spatial and time derivatives $[D_\alpha,D_t]=0$ and the skewsymmetry property for $P$, namely $P^{ji}_SD^S=-\bigl(-D^S\bigr)P^{ij}_S$, to obtain that condition \eqref{eq:main-comp} is, explicitly,
\begin{equation}\label{eq:proof-comp}
\left[P^{jm}_SD^S(-D)^Q\frac{\partial X^i}{\partial u^m_Q}+\frac{\partial X^j}{\partial u^m_Q}D^QP^{mi}_SD^S-\big(D^QX^m\big)\frac{\partial P^{ji}_S}{\partial u^m_Q}D^S\right]\psi_i=0
\end{equation}
for any $\psi=\{\psi_i\}$.

On the other hand, the cocycle condition ${\rm d}_PX=0$ can be expressed using the formalism of PVAs \cite{casati2015deformations} as
\[
\big\{X(u^i)_\lambda u^j\big\}-\big\{u^i{}_\lambda X(u^j)\big\}-X\big(\big\{u^i{}_\lambda u^j\big\}\big)=0
\]
for all the pairs $(i,j)$ and a $\lambda$ bracket $\{u^i{}_{\lambda}u^j\}=P^{ji}_S\lambda^S$. This can be explicitly computed with the master formula \eqref{eq:master}, obtaining
\begin{gather}\label{eq:proof-symPVA}
P^{jm}_S(\lambda+\partial)^S(-\lambda-\partial)^Q\frac{\partial X^i}{\partial u^m_Q}+\frac{\partial X^j}{\partial u^m_Q}(\lambda+\partial)^Q P^{mi}_S\lambda^S-\big(D^Q X^m\big)\frac{\partial P^{ji}_S}{\partial u^m_Q}\lambda^S=0.
\end{gather}
Now it is enough to observe that the total derivatives $D$ in \eqref{eq:proof-comp} may act on both $\psi_i$ and differential expressions with $u$ variables. Denoting a total derivative $D^S$ acting on $\psi_i$ by $\lambda^S$ and a total derivative acting on a differential expression of the $u$ variables by $\partial^S$ (this second case, indeed, coincides with the definition of $\partial$ we used in presenting multidimensional PVAs), by the Leibniz rule we have $D^S=(\partial+\lambda)^S$. This replacement gives exactly \eqref{eq:proof-symPVA}.
\end{proof}

It is now clear that the sufficiency of condition \eqref{eq:main-comp} (and of its equivalent form \eqref{eq:comp-V}) is closely related to the triviality of $H^1(P)$. Indeed, the existence of any nontrivial cohomology class denotes the existence of compatible evolutionary systems without Hamiltonian formulation.

The study of the Poisson cohomology for multidimensional Hamiltonian operators has shown that, in general, such an object is highly non-trivial \cite{ccs17,ccs18,casati2015deformations}; we can therefore expect that we will need to discard some solutions of system \eqref{eq:comp-V} to identify \emph{bona fide} Hamiltonian systems. An independent knowledge of the Poisson cohomology of the operator can help us in this task. Conversely, finding the solutions of \eqref{eq:comp-V} that fail to be Hamiltonian provides an explicit method to compute the (first) Poisson cohomology of $P$.

While the Poisson cohomology for homogeneous operators, and in particular for structures of hydrodynamic type, has been previously studied in the aforementioned references, the one for nonhomogeneous ones has not been computed yet, at best of our knowledge. We will hereinafter demonstrate how $H^1$ for a nonhomogeneous operator of order $1{+}0$ is closely related to the bi-Hamiltonian cohomology of the two homogeneous components ${\rm BH}(P_0,P_1)$ \cite{lz05}. Finally, in the following section we will show that in the $N=D=2$ case (extensively discussed in \cite{casati2015deformations}) it is isomorphic to the cohomology of the leading order operator.
\begin{Proposition}\label{prop:biham}
Let us consider a $1{+}0$-order Poisson bivector $P=P_1+P_0$ with nondegenerate~$P_0$ $($the underscripts, here and in the following, denote the \emph{differential degree} of the objects$)$. Then~${H^1_{\leq 1}(P)={\rm BH}^1_{1}(P_1,P_0)}$.
\end{Proposition}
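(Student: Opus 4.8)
\emph{Strategy.} The plan is to exploit the decomposition ${\rm d}_P={\rm d}_{P_1}+{\rm d}_{P_0}$ coming from $P=P_1+P_0$, organised by the differential degree introduced above. Since $P_0=\omega$ is ultralocal (degree $0$) and $P_1$ has degree $1$, a short check with the master formula \eqref{eq:master} shows that ${\rm d}_{P_0}=[P_0,\cdot]$ preserves the differential degree while ${\rm d}_{P_1}=[P_1,\cdot]$ raises it by one, both raising the form degree by one. Thus $\big(\Lambda^\bullet,{\rm d}_P\big)$ is filtered by differential degree, and ${\rm BH}^1_1(P_1,P_0)$ is, by construction, the term $E_2^{(1,1)}$ of the associated spectral sequence, namely the ${\rm d}_{P_1}$-cohomology computed on the ${\rm d}_{P_0}$-cohomology (it is natural to take ${\rm d}_{P_0}$ first, since nondegeneracy of $P_0$ is what makes that page controllable). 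The whole statement is then the assertion that, in form degree $1$ and differential degree $\le 1$, this spectral sequence already computes $H^1(P)$.

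\emph{Normalising cocycles.} First I would write a representative of a class in $H^1_{\le 1}(P)$ as $X=X_0+X_1$ with $X_d\in\Lambda^1_d$, and split ${\rm d}_PX=0$ by differential degree into the three equations ${\rm d}_{P_0}X_0=0$, ${\rm d}_{P_1}X_0+{\rm d}_{P_0}X_1=0$ and ${\rm d}_{P_1}X_1=0$. Here the nondegeneracy of $P_0$ enters decisively: the Poincaré lemma for the symplectic (nondegenerate) structure $\omega$ gives $H^1_0(P_0)=0$, so the first equation forces $X_0={\rm d}_{P_0}f$ for some $f\in\Lambda^0_0$. Subtracting the coboundary ${\rm d}_Pf$, whose degree-$0$ component is exactly $X_0$, I may assume $X_0=0$ in the class. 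The surviving equations then read ${\rm d}_{P_0}X_1=0$ and ${\rm d}_{P_1}X_1=0$, so that $X_1$ is simultaneously $P_0$- and $P_1$-closed and defines a class in $E_2^{(1,1)}={\rm BH}^1_1(P_1,P_0)$.

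\emph{Well-definedness and inverse.} It then remains to turn this normalisation into a well-defined isomorphism $[X]\mapsto[X_1]$. The ambiguity in $f$ is a degree-$0$ Casimir of $P_0$, which by nondegeneracy is constant and hence has vanishing variational derivative, so ${\rm d}_{P_1}f=0$ and $X_1$ is unaffected; moreover, replacing $X$ by a ${\rm d}_P$-coboundary alters $X_1$, after the same normalisation (the ${\rm d}_{P_1}$-contributions cancel), only by a ${\rm d}_{P_0}$-exact term, which is already zero on the $E_1$ page $H^\bullet_\bullet(P_0)$ and therefore in ${\rm BH}^1_1$. Conversely, any $X_1\in\Lambda^1_1$ with ${\rm d}_{P_0}X_1={\rm d}_{P_1}X_1=0$ is a ${\rm d}_P$-cocycle of differential degree $\le 1$, which supplies the inverse map; injectivity follows because a normalised representative that is ${\rm d}_P$-exact becomes, once $X_0=0$, a ${\rm d}_{P_0}$-coboundary, hence trivial already in $E_1^{(1,1)}$. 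Together these yield $H^1_{\le 1}(P)\cong{\rm BH}^1_1(P_1,P_0)$.

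\emph{Main obstacle.} The difficulty is twofold. The first part is establishing the symplectic Poincaré lemma for $\omega$ in the multidimensional variational setting, i.e.\ the vanishing $H^1_0(P_0)=0$ together with the identification of the degree-$0$ $P_0$-Casimirs with constants; this is precisely what the nondegeneracy hypothesis is meant to supply, and without it the cocycles cannot be normalised. The second, more delicate part is guaranteeing that no higher spectral-sequence differentials ${\rm d}_r$ $(r\ge 2)$ obstruct the identification, equivalently that the coboundary bookkeeping above is exhaustive and no class of differential degree exactly $1$ is reached by a coboundary originating in higher differential degree. This is exactly where the truncation to differential degree $\le 1$ is indispensable: it isolates the $E_2$-term of the pencil from the genuinely multidimensional, higher-degree part of the Poisson cohomology, which is in general large.
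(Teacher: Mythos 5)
Your cocycle analysis coincides with the paper's own computation: you split ${\rm d}_PX=0$ by differential degree into the same three equations \eqref{eq:proof-BH1}, use nondegeneracy of $P_0$ to solve $X_0={\rm d}_0f$, and subtract the coboundary ${\rm d}_Pf$ to obtain a normalised representative $X_1-{\rm d}_1f$ lying in $(\operatorname{Ker}{\rm d}_0\cap\operatorname{Ker}{\rm d}_1)^1_1$; the paper performs the identical manipulation, phrased as $X={\rm d}_Ph_0+{\rm d}_0h_1$ with ${\rm d}_0h_1={\rm d}_1g_0+\xi_1$. Your handling of the Casimir ambiguity in $f$ (constants have vanishing variational derivative, so ${\rm d}_1f$ is unaffected) is also correct.

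The genuine gap is your identification of ${\rm BH}^1_1(P_1,P_0)$ with the $E_2$-page of the spectral sequence, i.e., with ${\rm d}_1$-cohomology computed on ${\rm d}_0$-cohomology, and the consequent rule that ${\rm d}_0$-exact terms are ``zero in ${\rm BH}^1_1$'' (which you use both for well-definedness and for injectivity). That is not the definition in use (see \cite{lz05}): ${\rm BH}^p_d$ is $\bigl(\operatorname{Ker}{\rm d}_0\cap\operatorname{Ker}{\rm d}_1\cap\Lambda^p_d\bigr)\big/{\rm d}_0{\rm d}_1\Lambda^{p-2}$, which in form degree $p=1$ is the plain kernel intersection, with \emph{no} quotient. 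The distinction is fatal here, because the very fact you invoke for the normalisation --- triviality of $H^1(P_0)$ for nondegenerate $P_0$, which holds in every differential degree, not only degree $0$ --- forces your $E_1$-page to vanish in form degree one, hence $E_2^{(1,1)}=0$, and your claimed isomorphism would read $H^1_{\le1}(P)=0$. This contradicts the paper's examples: for $P=P_1+\omega$ the fields $X^{(1)}=\bigl(\alpha u_x+\beta u_y,\ \alpha v_x+\beta v_y\bigr)^{\rm T}$ are nontrivial classes of $H^1_{\le 1}(P)$, and they are ${\rm d}_0$-exact, since $X^{(1)}=P_0(\delta h_1)$ for $h_1=-\frac{1}{\eta}\int u(\alpha v_x+\beta v_y)$. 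Indeed, because $H^1_1(P_0)=0$, \emph{every} element of $(\operatorname{Ker}{\rm d}_0\cap\operatorname{Ker}{\rm d}_1)^1_1$ is ${\rm d}_0$-exact, so quotienting by ${\rm d}_0$-exact terms annihilates the whole group: being a ${\rm d}_0$-coboundary does not make a vector field a ${\rm d}_P$-coboundary of differential degree $\le1$, which is the relevant notion of triviality. The repair is to stop quotienting: the normalised representative is an \emph{element} of $(\operatorname{Ker}{\rm d}_0\cap\operatorname{Ker}{\rm d}_1)^1_1={\rm BH}^1_1$, and conversely every such element is a ${\rm d}_P$-cocycle of degree $\le1$; this is exactly how the paper concludes. (The issue you raise about higher differentials --- whether a nonzero element of the kernel intersection could still be ${\rm d}_P$-exact within degree $\le1$ --- is a real subtlety, but it concerns injectivity of the map from ${\rm BH}^1_1$ into $H^1_{\le1}(P)$, not a quotient to be imposed on ${\rm BH}^1_1$; the paper leaves it implicit as well, but its formulation is at least consistent with the examples.)
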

\begin{proof}
Let ${\rm d}_P$ (resp.\ ${\rm d}_0$ and ${\rm d}_1$) the Poisson differentials associated to the bivectors $P$, $P_0$ and $P_1$, and note that $\deg P_0=0$, $\deg P_1=1$. The first observation is that ${\rm d}_0$ and ${\rm d}_1$ form a \emph{differential bi-complex}, namely ${\rm d}_0^2=0$, ${\rm d}_1^2=0$, ${\rm d}_0{\rm d}_1=-{\rm d}_1{\rm d}_0$. The first two identities are obvious because $P_0$ and $P_1$ are Poisson bivectors, while the third one corresponds to their compatibility in the bi-Hamiltonian sense. All the three identities can be obtained by considering the homogeneous parts of ${\rm d}_P^2=({\rm d}_0+{\rm d}_1)^2=0$. Now let us consider elements of $Z^1_{\leq 1}(P)$. They must be vector fields $X=X_0+X_1$ such that ${\rm d}_P X=0$; expanding the condition by differential degree, we have
\begin{gather}\label{eq:proof-BH1}
 {\rm d}_0 X_0=0,\qquad {\rm d}_0X_1+{\rm d}_1X_0=0,\qquad {\rm d}_1X_1=0.
\end{gather}
In particular, then, $X_0\in Z^1_0(P_0)$ and $X_1\in Z^1_1(P_1)$. It has been known for long time \cite{getzler} that, for invertible bivectors $P_0$ (and trivial de~Rham cohomology of the target space $M$, which we always assume is a ball) $H^1(P_0)=0$, so there exists a local functional $h_0$ such that $X_0={\rm d}_0h_0$; on the other hand, $H^1_1(P_1)$ is in general non trivial: therefore, $X_1={\rm d}_1 k_0 + \xi_1$, where $k_0$ is a local functional (that, for homogeneity reasons, must be of differential order~0) and $\xi_1\in H^1_1(P_1)$. We can now replace the expressions for $X_0$ and $X_1$ in the second equation of \eqref{eq:proof-BH1}. This gives us the identity
\begin{gather*}
{\rm d}_0{\rm d}_1k_0+{\rm d}_0\xi_1+{\rm d}_1{\rm d}_0h_0={\rm d}_0=({\rm d}_1(k_0-h_0)+\xi_1)=0,
\end{gather*}
which, thanks to the triviality of $H^1(P_0)$, is
\begin{equation}\label{eq:proof-BH2}
{\rm d}_1(k_0-h_0)+\xi_1={\rm d}_0 h_1.
\end{equation}
This means that, for $X=X_0+X_1$, we can write
\begin{gather*}
X=({\rm d}_0+{\rm d}_1)h_0+{\rm d}_1(k_0-h_0)+\xi_1={\rm d}_P h_0 + {\rm d}_0h_1.
\end{gather*}
However, this expression shows that the elements of \smash{$H^1_{\leq1}(P)$} are of the form ${\rm d}_0h_1={\rm d}_1 g_0+\xi_1$, where we simply denote $k_0-h_0=g_0$. In particular, we have ${\rm d}_0^2h_1=0$ and, thanks to \eqref{eq:proof-BH2}, ${\rm d}_1{\rm d}_0h_1=0$, so $H^1_{\leq 1}(P)= (\operatorname{Ker} {\rm d}_0\cap\operatorname{Ker} {\rm d}_1 )^1_{1}$. This is exactly the definition of ${\rm BH}^1_{1}(P_0,P_1)$, see, for instance, \cite{dubrovin2001normal,lz05}.
\end{proof}

\section{Examples}
\subsection[N=D=2 systems]{$\boldsymbol{N=D=2}$ systems}
\subsubsection{The nonhomogeneous Hamiltonian structures}
In this section, we focus on the study of the Poisson brackets of 1+0 hydrodynamic type in the case $D=N=2$, along with their compatible systems. In this case we consider the three possible normal forms for the Hamiltonian structures of hydrodynamic type \cite{ferapontov2011classification,mokhov2008classification}
\begin{gather*}
P_1=\begin{pmatrix} 1&0\\0&0\\ \end{pmatrix} \frac{\rm d}{{\rm d}x}+\begin{pmatrix} 0&0\\0&1\\ \end{pmatrix} \frac{\rm d}{{\rm d}y},\qquad
P_2=\begin{pmatrix} 0&1\\1&0\\ \end{pmatrix} \frac{\rm d}{{\rm d}x}+\begin{pmatrix} 0&0\\0&1\\ \end{pmatrix} \frac{\rm d}{{\rm d}y},\\
P_3=\begin{pmatrix} 2u&v\\v&0\\ \end{pmatrix} \frac{\rm d}{{\rm d}x}+\begin{pmatrix} 0&u\\u&2v\\ \end{pmatrix} \frac{\rm d}{{\rm d}y}+\begin{pmatrix} u_x&u_y\\v_x&v_y\\ \end{pmatrix}
\end{gather*}
and the ultralocal structure
\smash{$
\omega=\bigl(\begin{smallmatrix} 0&f(u,v)\\-f(u,v)&0\\ \end{smallmatrix}\bigr)$},
which is Hamiltonian for any choice of $f$.

Now one should investigate which kind of nonhomogeneous structures are allowed starting with these pairs of Hamiltonian structures $(P_i,\omega)$. The research method involves applying Theorem \ref{theor5} to obtain compatibility conditions, leading to the following results.
\begin{Proposition}
The operators $P=P_1+\omega$ and $P_2+\omega$ are Hamiltonian if and only if $f=\eta$ $(\eta$ is a constant$)$. The operator $P_3+\omega$ is Hamiltonian if and only if $f=0$.
\end{Proposition}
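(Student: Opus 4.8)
The plan is to apply Theorem~\ref{theor5} directly to each pair $(P_i,\omega)$. Since $P_1$, $P_2$, $P_3$ are the known normal forms of Hamiltonian operators of hydrodynamic type, their coefficients $\big(g^{ij\alpha},b^{ij\alpha}_k\big)$ automatically satisfy \eqref{con1}; and since $N=2$ and $\omega$ is skewsymmetric with $\omega^{11}=\omega^{22}=0$, $\omega^{12}=-\omega^{21}=f$, the cyclic condition \eqref{con2} holds identically (it is the Jacobi identity for a bivector on a two-dimensional manifold, which is vacuous). Hence, for each $i$, the operator $P_i+\omega$ is Hamiltonian if and only if the cross-conditions \eqref{con3-1} and \eqref{con3-2} hold, with $T^{ikj\alpha}=g^{il\alpha}\omega^{kj}_{,l}-b^{ik\alpha}_l\omega^{lj}-b^{ij\alpha}_l\omega^{kl}$. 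So the whole statement reduces to solving \eqref{con3} for $f$ in each of the three cases.

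For $P_1$ and $P_2$ the metrics are constant and the connection coefficients vanish, $b^{ij\alpha}_k=0$, so $T^{ikj\alpha}=g^{il\alpha}\omega^{kj}_{,l}$ involves only the first derivatives $f_u$, $f_v$. I would write out the few nonzero components of the cyclic identity \eqref{con3-1}: for $P_1=\mathrm{diag}(1,0)\,\partial_x+\mathrm{diag}(0,1)\,\partial_y$ the $x$-components force $f_u=0$ and the $y$-components force $f_v=0$ (and likewise for $P_2$, where the off-diagonal $g^x$ merely swaps which derivative appears). Thus $f$ must be constant. Conversely, a constant $f$ makes $\omega$ constant, so $T\equiv 0$ and both \eqref{con3-1} and \eqref{con3-2} hold trivially; this gives the ``if'' direction and the result $f=\eta$.

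The operator $P_3$ is the genuinely new case. Reading the coefficients off its expression, the nonzero connection terms are $b^{11x}_1=b^{21x}_2=b^{12y}_1=b^{22y}_2=1$. The decisive difference from $P_1,P_2$ is that $T^{ikj\alpha}$ now also contains the undifferentiated pieces $-b^{ik\alpha}_l\omega^{lj}-b^{ij\alpha}_l\omega^{kl}$, which are proportional to $f$ itself rather than to its derivatives. Computing a few components, e.g.\ $T^{112x}=2u f_u+v f_v-f$, $T^{211x}=0$, together with $T^{221x}=-v f_u$ and $T^{112y}=u f_v$, the cyclic condition \eqref{con3-1} yields simultaneously $f_u=0$, $f_v=0$ and the \emph{inhomogeneous} relation $2u f_u+v f_v-f=0$. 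This last equation, combined with $f_u=f_v=0$, forces $f=0$; so \eqref{con3-1} alone already pins down $f\equiv0$, after which \eqref{con3-2} holds automatically because $\omega$ vanishes.

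The main obstacle is the organized bookkeeping in the $P_3$ case: there are eight components $T^{ikj x}$ and eight $T^{ikj y}$, and one must keep the index positions in \eqref{con3-1}--\eqref{con3-2} straight while correctly tracking the two undifferentiated $\omega$-terms. The conceptual point to get right is that these undifferentiated terms are precisely what distinguishes $P_3$ (forcing $f=0$) from $P_1,P_2$ (allowing any constant $f$); once the relation $2uf_u+vf_v-f=0$ is produced alongside $f_u=f_v=0$, the conclusion is immediate. I would finally verify \eqref{con3-2} in all three cases to confirm that it imposes no constraint beyond those already found.
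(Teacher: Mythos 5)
Your proposal is correct and follows exactly the paper's (only sketched) method: the paper proves this proposition by applying Theorem~\ref{theor5} to each pair $(P_i,\omega)$, which is precisely what you do, and your component computations (in particular $T^{112x}=2uf_u+vf_v-f$ for $P_3$, whose inhomogeneous term is what forces $f=0$ there) check out. The observations that \eqref{con1} holds because the $P_i$ are known Hamiltonian normal forms and that \eqref{con2} is vacuous for $N=2$ are also correct, so the reduction to \eqref{con3} is legitimate.
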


\subsubsection{Compatibility conditions and Hamiltonian equations}\label{sec:res}
We can build upon the results from the previous subsection and search for compatible quasilinear systems $u^i_t=V^{i\alpha}_lu^l_\alpha+W^i$. The corresponding method involves computing the seven conditions outlined in Theorem~\ref{theorem8}.

\begin{Proposition}\label{proposition4.2}
The solutions of the compatibility system corresponding to the Hamiltonian operator
\begin{gather*}
P_1+\omega=\begin{pmatrix} 1&0\\0&0\\ \end{pmatrix} \frac{\rm d}{{\rm d}x}+\begin{pmatrix} 0&0\\0&1\\ \end{pmatrix} \frac{\rm d}{{\rm d}y}+\begin{pmatrix} 0&\eta\\-\eta&0\\ \end{pmatrix}
\end{gather*}
are as follows:
\begin{gather*}
V^{2x}_1=V^{1y}_2=0, \qquad V^{2x}_2=\alpha, \qquad V^{1y}_1=\beta,\qquad
V^{1x}_2=V^{2y}_1=\frac{\partial^2M}{\partial u\partial v}, \\ V^{1x}_1=\frac{\partial^2M}{\partial u^2}+\alpha, \qquad V^{2y}_2=\frac{\partial^2M}{\partial v^2}+\beta,\qquad
W^1=\eta\frac{\partial M}{\partial v}, \qquad W^2=-\eta\frac{\partial M}{\partial u},
\end{gather*}
where $M$ is an arbitrary function of $u$ and $v$, and $\alpha$, $\beta$ are arbitrary constants. Therefore, any candidate Hamiltonian system with Hamiltonian $P_1+\omega$ must be of the form
\begin{gather}
u_t=\left(\frac{\partial^2 M}{\partial u^2}+\alpha\right)u_x+\beta u_y+\frac{\partial^2 M}{\partial u\partial v}v_x+\eta\frac{\partial M}{\partial v},\nonumber\\
v_t=\frac{\partial^2 M}{\partial u\partial v}u_y+\alpha v_x+\left(\frac{\partial^2 M}{\partial v^2}+\beta\right)v_y-\eta\frac{\partial M}{\partial u}.\label{eq:solcomp1}
\end{gather}
\end{Proposition}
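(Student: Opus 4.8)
The plan is to solve the overdetermined system of seven PDE constraints from Theorem \ref{theorem8} for the specific operator $P_1+\omega$, reducing them to explicit relations among the coefficients $V^{i\alpha}_l$ and $W^i$. First I would substitute the concrete data $g^x=\bigl(\begin{smallmatrix}1&0\\0&0\end{smallmatrix}\bigr)$, $g^y=\bigl(\begin{smallmatrix}0&0\\0&1\end{smallmatrix}\bigr)$, $b^{ij\alpha}_k=0$ (since the leading operator has constant coefficients) and $\omega=\bigl(\begin{smallmatrix}0&\eta\\-\eta&0\end{smallmatrix}\bigr)$ into each of the conditions \eqref{syscon1}--\eqref{syscon7}. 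Because $b\equiv 0$ and $g$ is constant, most terms drop out, and the conditions collapse into linear first-order PDEs relating the derivatives of the $V$'s and $W$'s. I expect \eqref{syscon5} to become vacuous or trivially satisfied, \eqref{syscon2}--\eqref{syscon3} to pin down the algebraic structure of the $V^{i\alpha}_l$ (forcing $V^{2x}_1=V^{1y}_2=0$ and the symmetry $V^{1x}_2=V^{2y}_1$), and \eqref{syscon6}--\eqref{syscon7} together with \eqref{syscon1} to couple $W^i$ to $\omega$ and to the $u$-derivatives of the $V$'s.

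Next I would integrate the resulting relations. The symmetry and closure conditions on the homogeneous part should show that the symmetric combinations $V^{1x}_1-\alpha$, $V^{2y}_2-\beta$, and $V^{1x}_2=V^{2y}_1$ are the second partial derivatives of a single potential $M(u,v)$: that is, the $1$-form data is closed and hence exact on the (simply connected) target, yielding $V^{1x}_1=M_{uu}+\alpha$, $V^{2y}_2=M_{vv}+\beta$, $V^{1x}_2=V^{2y}_1=M_{uv}$, with $\alpha,\beta$ arising as constants of integration. Then conditions \eqref{syscon6}--\eqref{syscon7}, which involve $\omega$ and the derivatives of $W$, should force $W^1=\eta M_v$ and $W^2=-\eta M_u$; I would check these against \eqref{syscon1} (equivalently $\tilde\nabla^i W^j=\tilde\nabla^j W^i$, per the Remark following Theorem \ref{theorem8}) to confirm consistency rather than to generate new constraints.

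The main obstacle will be the bookkeeping in verifying that the integrability (closedness) conditions needed to introduce the single potential $M$ are actually implied by the system, rather than imposed by hand — in particular showing that the mixed-partial constraints coming from \eqref{syscon3} and \eqref{syscon7} are mutually compatible so that one scalar function $M$ suffices to encode all the free data. A secondary technical point is correctly tracking the cyclic summations in \eqref{syscon2} and \eqref{syscon4} for the two-dimensional index range, since the ``$\sum_{(\alpha,\beta)}$'' notation conflates the $(x,x)$, $(y,y)$, and $(x,y)$ sectors; here the reformulation in the Remark as three one-dimensional conditions (with $g^{(3)}=g^x+g^y$ nondegenerate) is the clean way to extract the independent equations. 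Once the data is shown to be parametrized by $(M,\alpha,\beta)$, substituting back into $u^i_t=V^{i\alpha}_l u^l_\alpha+W^i$ gives the stated system \eqref{eq:solcomp1} directly, completing the proof.
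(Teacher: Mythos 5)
Your proposal is correct and follows essentially the same route as the paper: substitute the data of $P_1+\omega$ into the seven conditions of Theorem~\ref{theorem8}, observe that the constant-coefficient leading part and $b\equiv 0$ collapse them to simple relations, and integrate the resulting closure conditions to parametrize everything by a potential $M(u,v)$ and constants $\alpha,\beta$. The only cosmetic difference is ordering: the paper first introduces a potential $P$ for $(W^1,W^2)$ from the divergence condition \eqref{syscon1} and a separate potential $R$ for the $V$'s, then links them via the mixed conditions and rescales $M=P/\eta$, whereas you recover the $W$'s by integrating \eqref{syscon6}--\eqref{syscon7} after fixing the $V$-potential; both yield the same result.
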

\begin{Proposition}\label{proposition4.3}
The solutions of the compatibility system corresponding to the Hamiltonian operator
\begin{gather*}
P_2+\omega=\begin{pmatrix} 0&1\\1&0\\ \end{pmatrix} \frac{\rm d}{{\rm d}x}+ \begin{pmatrix} 0&0\\0&1\\ \end{pmatrix} \frac{\rm d}{{\rm d}y}+\begin{pmatrix} 0&\eta\\-\eta&0\\ \end{pmatrix}
\end{gather*}
are as follows:
\begin{gather*}
 V^{1y}_2=0, \qquad V^{1y}_1=\alpha ,
 \qquad V^{2x}_1=\frac{\partial^2M}{\partial u^2} ,\qquad
 V^{2y}_2=\frac{\partial^2M}{\partial v^2}+\alpha, \qquad
 V^{2x}_2=\frac{\partial^2M}{\partial u \partial v}+\beta,\\ V^{1x}_1=\frac{\partial^2M}{\partial u \partial v}+\beta,\qquad
 V^{2y}_1=\frac{\partial^2M}{\partial u \partial v} \qquad, V^{1x}_2=\frac{\partial^2M}{\partial v^2}, \qquad
 W^1=\eta\frac{\partial M}{\partial v} ,
\\ W^2=-\eta\frac{\partial M}{\partial u} ,
\end{gather*}
where $M$ is an arbitrary function of $u$ and $v$, and $\alpha$, $\beta$ are arbitrary constants. Therefore, any candidate Hamiltonian system with Hamiltonian $P_2+\omega$ must be of the form
\begin{gather}
u_t=\left(\frac{\partial^2 M}{\partial u\partial v}+\beta\right)u_x+\alpha u_y+\frac{\partial^2 M}{\partial v^2}v_x+\eta\frac{\partial M}{\partial v},\nonumber\\
v_t=\frac{\partial^2 M}{\partial u^2}u_x+\frac{\partial^2 M}{\partial u\partial v}u_y+\left(\frac{\partial^2 M}{\partial u\partial v}+\beta\right)v_x+\left(\frac{\partial^2 M}{\partial v^2}+\alpha\right)v_y-\eta\frac{\partial M}{\partial u}.\label{eq:solcomp2}
\end{gather}
\end{Proposition}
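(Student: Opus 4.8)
The plan is to prove the proposition by direct substitution of the data of $P_2+\omega$ into the seven necessary conditions \eqref{syscon1}--\eqref{syscon7} of Theorem~\ref{theorem8}, exactly as in the previous proposition, and then to solve the resulting overdetermined linear system for the unknown functions $V^{i\alpha}_l(u,v)$ and $W^i(u,v)$. For this operator the coefficients are particularly simple: the two metrics $g^x=\bigl(\begin{smallmatrix}0&1\\1&0\end{smallmatrix}\bigr)$ and $g^y=\bigl(\begin{smallmatrix}0&0\\0&1\end{smallmatrix}\bigr)$ are constant, the linear-jet coefficients vanish, $b^{ij\alpha}_k\equiv 0$, and $\omega^{ij}$ is the constant skew matrix with entry $\eta$, so that $\omega^{ij}_{,l}=0$. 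Substituting these into \eqref{syscon1}--\eqref{syscon7} collapses every term carrying a factor of $b$, $g_{,s}$ or $\omega_{,l}$, leaving a manageable constant-coefficient linear system.

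I would then organize the work into three blocks. The purely homogeneous conditions \eqref{syscon2}--\eqref{syscon5} involve only $V$ and the flat metrics. Condition \eqref{syscon2} is algebraic: reading it off for the pairs $\{x,x\}$, $\{y,y\}$, $\{x,y\}$ produces the rigid relations among the components (forcing, for instance, $V^{1y}_2=0$ and the coincidence $V^{1x}_1=V^{2x}_2$). Conditions \eqref{syscon3}--\eqref{syscon4} are first-order and encode the closedness/symmetry of $V$. Here one cannot simply invoke the clean reformulation $\nabla^{(\kappa)}_iV^{j(\kappa)}_k=\nabla^{(\kappa)}_kV^{j(\kappa)}_i$ of the Remark, since for $P_2$ the metric $g^y$ is degenerate (only $g^x$ and $g^x+g^y$ are invertible); I would therefore integrate \eqref{syscon3}--\eqref{syscon4} component by component. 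The outcome is that the non-constant entries of $V$ are symmetric second derivatives of a single scalar potential $M(u,v)$, while the freedom in the flat directions furnishes the two integration constants $\alpha,\beta$; condition \eqref{syscon5}, which carries second derivatives of $V$, is the differential consequence of this structure and is checked to hold automatically.

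The second block is the zeroth-order condition \eqref{syscon1}: since $\omega$ is constant and skew, it reduces to $\partial_uW^1+\partial_vW^2=0$, the divergence-free condition guaranteeing a potential $\Phi$ with $W^1=\eta\Phi_v$, $W^2=-\eta\Phi_u$. The third block, the mixed conditions \eqref{syscon6}--\eqref{syscon7}, couples $W$ to $V$: with $g$ flat, $b=0$ and $\omega$ constant, \eqref{syscon6} becomes a first-order relation of the schematic form $g^{is\alpha}W^j_{,s}+g^{sj\alpha}W^i_{,s}=\omega^{is}V^{j\alpha}_s-V^{i\alpha}_s\omega^{sj}$, which yields identities such as $\partial_vW^1=\eta V^{1x}_2=\eta M_{vv}$ and forces the potential $\Phi$ of $W$ to coincide, up to irrelevant constants, with the potential $M$ of $V$; this gives $W^1=\eta M_v$, $W^2=-\eta M_u$, and \eqref{syscon7} is then its compatible differential consequence. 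Reading off all components produces the stated list.

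The main obstacle is the bookkeeping in the first block: solving the overdetermined constant-coefficient system \eqref{syscon2}--\eqref{syscon5} and recognizing that its general solution assembles precisely into the Hessian of one potential $M$ (plus the flat-direction constants), rather than into several independent potentials. The degeneracy of $g^y$ makes this step genuinely hands-on, because the symmetric-connection shortcut is unavailable and one must verify directly that the cross-differentiation of \eqref{syscon3}--\eqref{syscon4} closes consistently, which is exactly the role of \eqref{syscon5}. Once the Hessian structure of $V$ is established, the identification $\Phi=M$ in the mixed block and the verification of \eqref{syscon1} are routine, completing the proof.
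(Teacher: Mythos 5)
Your proposal is correct and follows essentially the same route as the paper: the paper proves Proposition~\ref{proposition4.2} as a template (and states that Propositions~\ref{proposition4.3}--\ref{proposition4.4} follow likewise) by substituting the operator's data into the seven conditions \eqref{syscon1}--\eqref{syscon7} of Theorem~\ref{theorem8}, integrating the resulting overdetermined linear system to get Hessian potentials plus flat-direction constants, and then matching the potential of $W$ with that of $V$ via the mixed conditions, exactly as you describe. Your block-by-block organization (homogeneous, ultralocal, mixed) and the observation that \eqref{syscon5} holds automatically once the Hessian structure is established are consistent with, and faithfully reproduce, the paper's computation.
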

\begin{Proposition}\label{proposition4.4}
The solutions of the compatibility system corresponding to the Hamiltonian operator
\begin{gather*}
P_3=\begin{pmatrix} 2u&v\\v&0\\ \end{pmatrix} \frac{\rm d}{{\rm d}x}+\begin{pmatrix} 0&u\\u&2v\\ \end{pmatrix} \frac{\rm d}{{\rm d}y}+\begin{pmatrix} u_x&u_y\\v_x&v_y\\ \end{pmatrix}
\end{gather*}
are as follows:
\begin{gather*}
 V^{1x}_1=2u\frac{\partial^2M}{\partial u^2}+\frac{\partial M}{\partial u}+v\frac{\partial^2M}{\partial u \partial v}, \qquad V^{1x}_2=2u\frac{\partial^2M}{\partial u \partial v}+v\frac{\partial^2M}{\partial u \partial v},\qquad
 V^{2x}_1=v\frac{\partial^2M}{\partial u^2}, \\ V^{2x}_2=\frac{\partial M}{\partial u}+v\frac{\partial^2M}{\partial u \partial v}, \qquad
 V^{1y}_1=\frac{\partial M}{\partial v}+u\frac{\partial^2M}{\partial u \partial v}, \qquad V^{1y}_2=u\frac{\partial^2M}{\partial v^2}, \\
 V^{2y}_1=u\frac{\partial^2M}{\partial u^2}+2v\frac{\partial^2M}{\partial u \partial v} ,
 \qquad V^{2y}_2=\frac{\partial M}{\partial v}+u\frac{\partial^2M}{\partial u\partial v}+2v\frac{\partial^2M}{\partial v^2},
\end{gather*}
where $M$ is an arbitrary function of $u$ and $v$. Any system compatible with $P_3+\omega$ is therefore of the form
\begin{gather*}
u_t=\left(2u\frac{\partial^2 M}{\partial u^2}+v\frac{\partial^2 M}{\partial u\partial v}+\frac{\partial M}{\partial u}\right)u_x+\left(u\frac{\partial^2 M}{\partial u\partial v}+\frac{\partial M}{\partial v}\right) u_y+\left(2u\frac{\partial^2 M}{\partial u\partial v}+v\frac{\partial^2M}{\partial u\partial v}\right)v_x\\
\phantom{u_t=}{}+
u\frac{\partial^2 M}{\partial v^2}v_y,\\
v_t=v\frac{\partial^2 M}{\partial u^2}u_x+\left(u\frac{\partial^2 M}{\partial u^2}+2v\frac{\partial^2 M}{\partial u\partial v}\right) u_y+\left(v\frac{\partial^2 M}{\partial u\partial v}+\frac{\partial M}{\partial u}\right)v_x\\
\phantom{v_t=}{}
+\left(u\frac{\partial^2 M}{\partial u\partial v}+2v\frac{\partial^2 M}{\partial v^2}+\frac{\partial M}{\partial v}\right)v_y.
\end{gather*}
\end{Proposition}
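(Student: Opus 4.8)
The plan is to solve the overdetermined system of compatibility conditions \eqref{syscon1}--\eqref{syscon7} of Theorem~\ref{theorem8} for the specific operator $P=P_3$ and to show that its general solution is parametrised by a single function $M(u,v)$. First I would read off the tensor data of $P_3$: writing $(u^1,u^2)=(u,v)$ and $(x^1,x^2)=(x,y)$, the leading metrics are $g^x=\bigl(\begin{smallmatrix}2u&v\\v&0\end{smallmatrix}\bigr)$ and $g^y=\bigl(\begin{smallmatrix}0&u\\u&2v\end{smallmatrix}\bigr)$, the third matrix encodes the nonzero coefficients $b^{11x}_1=b^{12y}_1=b^{21x}_2=b^{22y}_2=1$, and $\omega=0$ (recall that $P_3+\omega$ is Hamiltonian only for $f=0$). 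Both metrics are nondegenerate on the open dense set $uv\neq 0$, since $\det g^x=-v^2$ and $\det g^y=-u^2$; this nondegeneracy is the lever I would use throughout.

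The easy inclusion is that every system $u^i_t=P_3^{ij}M_j$ with Hamiltonian density $M$ solves the compatibility system: such a system is genuinely Hamiltonian with $H=\int M\,{\rm d}x\,{\rm d}y$, so Theorem~\ref{theorem8} guarantees that it satisfies \eqref{syscon1}--\eqref{syscon7}. Expanding $P_3^{ij}M_j$ gives $V^{i\alpha}_k=g^{ij\alpha}M_{jk}+b^{ij\alpha}_kM_j$ and $W^i=\omega^{ij}M_j=0$, which are exactly the claimed formulas. The real content of the statement is therefore the reverse inclusion, namely that \eqref{syscon1}--\eqref{syscon7} admit no further solutions.

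For the reverse inclusion I would proceed in two stages. Since $\omega=0$, condition \eqref{syscon1} is vacuous and \eqref{syscon6}--\eqref{syscon7} decouple from $V$, becoming a homogeneous overdetermined linear system in $W$ alone: \eqref{syscon6} is the Killing equation $\mathcal{L}_W g^\alpha=0$ for both $\alpha=x,y$, while \eqref{syscon7} adds second-order constraints. Using the nondegeneracy of $g^x$ and $g^y$ I would show that this system forces $W=0$. For the $V$-part I would invoke the observation of the Remark following Theorem~\ref{theorem8}: the single-direction ($\alpha=\beta$) parts of \eqref{syscon2}--\eqref{syscon5} are precisely the one-dimensional Dubrovin--Novikov compatibility conditions for each metric, so nondegeneracy yields the Hessian form $V^{ix}_k=g^{ilx}\nabla^x_l\nabla_k A$ and $V^{iy}_k=g^{ily}\nabla^y_l\nabla_k B$ for potentials $A$, $B$. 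Equivalently, one can algebraically invert the eight relations $V^{i\alpha}_k=g^{ij\alpha}M_{jk}+b^{ij\alpha}_kM_j$ for the five quantities $M_u,M_v,M_{uu},M_{uv},M_{vv}$, obtaining three over-determinations that must coincide with compatibility conditions.

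The heart of the argument, and the step I expect to be the main obstacle, is the genuinely multidimensional one: the mixed ($\alpha\neq\beta$) parts of \eqref{syscon2}, \eqref{syscon3} and \eqref{syscon5} couple the two directions and must force the two potentials to coincide, $A=B=M$ (up to an additive constant, which does not affect $V$), leaving no residual freedom of the kind that produced the constants $\alpha,\beta$ for the constant-coefficient operators $P_1$ and $P_2$. Concretely, this amounts to verifying that the mixed conditions are equivalent to the integrability (equality of mixed partials) of the recovered gradient and Hessian of a single function $M$. This is a direct but lengthy computation; that it closes up reflects the fact that, unlike $P_1$ and $P_2$, the curved operator $P_3$ carries no extra affine symmetries, so the first Poisson cohomology contributing non-Hamiltonian solutions vanishes and, in this case, the compatibility conditions turn out to be sufficient as well as necessary.
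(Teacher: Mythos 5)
Your strategy is genuinely different from the paper's. The paper never writes out a proof of Proposition~\ref{proposition4.4}: it proves Proposition~\ref{proposition4.2} as a model and the other cases are handled the same way, namely by writing out the seven conditions of Theorem~\ref{theorem8} for the given operator and integrating the resulting overdetermined system head-on. You instead split the claim into two inclusions, get the easy one from the necessity statement of Theorem~\ref{theorem8}, and attack the converse geometrically: Killing-type equations \eqref{syscon6}--\eqref{syscon7} for $W$, the per-direction Hessian representation coming from flatness and nondegeneracy of $g^x$ and $g^y$, and a gluing step for the two potentials. This organization is correct in outline, and it correctly isolates why $P_3$ behaves differently from $P_1$, $P_2$: for $P_3$ both metrics are nondegenerate, so the one-dimensional Tsarev-type rigidity applies direction by direction, whereas for $P_1$, $P_2$ the individual metrics $g^x$, $g^y$ are degenerate and this rigidity fails (which is the source of the non-Hamiltonian constants $\alpha$, $\beta$ there). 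Your approach even pays a dividend: expanding $P_3^{ij}M_j$ gives $V^{1x}_2=2uM_{uv}+vM_{vv}$, while the proposition as printed has $V^{1x}_2=2uM_{uv}+vM_{uv}$; since by your easy inclusion the flow of $H=\int M$ must lie in the solution set, the printed formula is a typo and yours is the correct one (so your remark that the expansion reproduces ``exactly the claimed formulas'' is right in substance, though not literally).

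There is, however, a genuine gap, and it sits exactly where the content of the proposition sits. The step you yourself call the heart of the argument --- that the mixed $\alpha\neq\beta$ parts of \eqref{syscon2}, \eqref{syscon3}, \eqref{syscon5} force the two per-direction potentials $A$, $B$ to admit a common $M$ with $\mathrm{Hess}_x(A)=\mathrm{Hess}_x(M)$ and $\mathrm{Hess}_y(B)=\mathrm{Hess}_y(M)$, with no residual parameters --- is asserted, not performed. Nothing a priori prevents the mixed conditions from admitting pairs $(A,B)$ with no common potential, in which case the solution space would be strictly larger than the claimed one-function family; ruling this out \emph{is} the proposition, so ``a direct but lengthy computation'' cannot be left as a promissory note. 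The same applies, more mildly, to the claim that \eqref{syscon6}--\eqref{syscon7} force $W=0$: true (the Killing equations for $g^x$ and $g^y$ jointly have only the zero solution), but again only sketched. Finally, the cohomological remark you close with cannot fill the hole as stated: to use it you would need to quote the computation of $H^1_0(P_3)\cong H^1_1(P_3)\cong 0$ from \cite{casati2015deformations}, invoke the Proposition of Section~\ref{sec:Pois} identifying solutions of \eqref{eq:comp-V} of this differential degree with $1$-cocycles, and then run the short argument the paper gives in its subsection on agreement with the Poisson cohomology. That would constitute a legitimate (and genuinely non-computational) proof of the converse inclusion, but you invoke the vanishing of $H^1$ only as a heuristic for why the computation ``closes up''; as written, the converse inclusion remains unproven.
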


To demonstrate our results, let us briefly present as an example the proof of Proposition~\ref{proposition4.2}.
\begin{proof}[Proof of Proposition~\ref{proposition4.2}]\allowdisplaybreaks
By computing the conditions in Theorem~\ref{theorem8}, we can obtain the following results:
\begin{subequations}
\begin{gather}
 W^1_{,1}+W^2_{,2}=0,\label{re1}\\
 V^{2x}_1=V^{1y}_2=0,\label{re2}\\
 V^{2x}_{2,1}=V^{2x}_{2,2}=V^{1y}_{1,1}=V^{1y}_{1,2}=0,\label{re3}\\
 V^{1x}_2=V^{2y}_1,\label{re4}\\
 V^{1x}_{1,2}=V^{1x}_{2,1},\label{re5}\\
 V^{2y}_{1,2}=V^{2y}_{2,1},\label{re6}\\
 W^1_{,1}=\eta V^{1x}_2,\label{re7}\\
 W^2_{,2}+\eta V^{2y}_1=0,\label{re8}\\
 W^1_{,2}+\eta V^{1y}_1=\eta V^{2y}_2,\label{re9}\\
 W^2_{,1}+\eta V^{1x}_1=\eta V^{2x}_2.\label{re10}
\end{gather}
\end{subequations}
From \eqref{re1}, it can be deduced that
$W^1=\frac{\partial P}{\partial v}$,
$W^2=-\frac{\partial P}{\partial u}$,
where $P=P(u,v)$ is an arbitrary function.
Upon observing \eqref{re3}, we deduce that
$V^{2x}_2=\alpha$,
$V^{1y}_1=\beta$,
where $\alpha$ and $\beta$ are arbitrary constants.
Continuing with \eqref{re4}--\eqref{re6}, we can obtain
$V^{1x}_1=\frac{\partial^2 R}{\partial u^2}$,
$V^{2y}_2=\frac{\partial^2 R}{\partial v^2}$,
$V^{1x}_2=V^{2y}_1=\frac{\partial^2 R}{\partial u\partial v}$,
where $R=R(u,v)$ is an arbitrary function.

Subsequently, substituting these obtained results into the remaining equations \eqref{re7}--\eqref{re10}, we can establish the following relationships:
\begin{gather*}
\frac{\partial^2 R}{\partial u\partial v}=\frac{1}{\eta}\frac{\partial^2P}{\partial u\partial v},\qquad
\frac{\partial^2 R}{\partial u^2}=\frac{1}{\eta}\frac{\partial^2P}{\partial u^2}+\alpha,\qquad
\frac{\partial^2 R}{\partial v^2}=\frac{1}{\eta}\frac{\partial^2P}{\partial v^2}+\beta.
\end{gather*}
Finally, by rescaling $P$ to write $M=\frac{1}{\eta}P$, we obtain \eqref{eq:solcomp1}.
\end{proof}

In Propositions \ref{proposition4.2}--\ref{proposition4.4}, we have obtained the \emph{candidate} Hamiltonian equations with Hamiltonian structure respectively $P_1+\omega$, $P_2+\omega$, and $P_3+\omega$. However, we know that in principle not all the solutions we found are \emph{bona fide} Hamiltonian equations. Indeed, we can directly compare~\eqref{eq:solcomp1} with the corresponding Hamiltonian equation for a Hamiltonian $h(u,v)$, which is of the form
\begin{gather*}
u_t=\frac{\partial^2 h}{\partial u^2}u_x+\frac{\partial^2 h}{\partial u\partial v}v_x+\eta\frac{\partial h}{\partial v},\qquad
v_t=\frac{\partial^2 h}{\partial u\partial v}u_y+\frac{\partial^2 h}{\partial v^2}v_y-\eta\frac{\partial h}{\partial u}
\end{gather*}
and find that the terms containing the two constant $\alpha$ and $\beta$ cannot be obtained as Hamiltonian equations.

Similarly, upon examining the remaining two operators, it can be observed that for both~${P_1\!+\!\omega}$ and $P_2+\omega$ exist non-Hamiltonian solutions, while all the solutions of the compatibility condition for $P_3$ are Hamiltonian.

\subsubsection{Agreement with the Poisson cohomology}
Let us recall from Section \ref{sec:Pois} that the non-Hamiltonian systems selected by the compatibility conditions are in one-to-one correspondence to elements of ${\rm BH}^1_1(P_i+\omega)$, and such elements are of the form ${\rm d}_i g_0+\xi_1$ (${\rm d}_i:={\rm d}_{P_i}$), with $\xi_1\in H^1_1(P_i)$. The relevant results about the first cohomology group for the Poisson bracket defined by $P_1$, $P_2$, $P_3$ have been presented in~\cite{casati2015deformations}. We have
\begin{gather*}
H^1_0(P_1)\cong \mathbb{R}^2, \qquad H^1_0(P_2)\cong \mathbb{R}^2, \qquad H^1_0(P_3)\cong 0, \\
H^1_1(P_1)\cong \mathbb{R}^2, \qquad H^1_1(P_2)\cong \mathbb{R}^2 , \qquad H^1_1(P_3)\cong 0.
\end{gather*}
For both $P_1$ and $P_2$, we have that a compatible (but not Hamiltonian) vector field must be of the form
$X={\rm d}_1 g_0+\xi_1$
for $\xi^1\in H^1_1$ and a function $g_0=g_0(u,v)$, subject to the condition~${{\rm d}_0\big({\rm d}_1g_0+\xi^1\big)=0}$. The explicit form for a representative in the cohomology classes $H^1_1(P_1)$ and $H^1_1(P_2)$ are, respectively,
\begin{gather*}
\xi_1^{(1)}=\begin{pmatrix} \beta u_y\\ \alpha v_x\end{pmatrix},\qquad\xi_1^{(2)}=\begin{pmatrix} \alpha (u_y-v_x)\\-\beta u_y\end{pmatrix}
\end{gather*}
for $(\alpha,\beta)\in\mathbb{R}^2$. This choice of names for the constants matches the results of Section~\ref{sec:res}. Explicitly solving \smash{${\rm d}_0\big({\rm d}_1 g_0^{(i)}+\xi^{(i)}_1\big)=0$}, we find $g_0$ and
\begin{gather*}
X^{(1)}=\begin{pmatrix}\alpha u_x+\beta u_y\\\alpha v_x+\beta v_y\end{pmatrix}, \qquad X^{(2)}=\begin{pmatrix}\alpha u_y+\beta u_x\\ \alpha v_y+\beta v_x\end{pmatrix},
\end{gather*}
a 2-dimensional space of solutions exactly corresponding to the non-Hamiltonian solutions in~\eqref{eq:solcomp1} and \eqref{eq:solcomp2}.

It should be noticed that the non-Hamiltonian systems $u^i_t=X^{(1)i}$, corresponding to nonzero~$\alpha$ and $\beta$ constants in \eqref{eq:solcomp1} and \eqref{eq:solcomp2}, can be made trivial by a linear change of the independent variables $(x,y,t)$, respectively ($t\mapsto t$, $x\mapsto x+\alpha t$, $y\mapsto y+\beta t$) and $(t\mapsto t, x\mapsto x+\beta t, y\mapsto y+\alpha t)$. However, this change of coordinates involves also the time variable, so it is outside the scope of geometric description captured by the theory of Poisson cohomology, describing evolutionary systems as time evolution of functions on a manifold and hence involving only spatial variables and dependent variables (see, for example,~\cite{ccs17}).

On the other hand, $H^1_1(P_3)$ is trivial, so the only possible non-Hamiltonian vector field must be of the form $X={\rm d}_1g_0$ and such that ${\rm d}_0{\rm d}_1g_0=0$. This also means that ${\rm d}_0g_0\in H^1_0(P_3)$, which is trivial too, implying ${\rm d}_0g_0\in B^1_0(P_1)$. However, ${\rm d}_1$ is an operator of differential order $1$, therefore $B^1_0(P_1)=0$ and $g_0\in\ker {\rm d}_0$, i.e., $g_0=\mathrm{const}$. Finally, constants are in the kernel of ${\rm d}_1$, too, which means that non-Hamiltonian solutions of the compatibility equation do not exist.

\subsection{3-waves system}
Let us consider a real reduction of the 3-waves system ($N=3$ in \eqref{N-W}), i.e., $u_{12}=-u_{21}=u^3$, etc., as follows
\begin{gather}
u^1_t=au^1_x+du^1_y+(b-c)u^2u^3,\nonumber\\
u^2_t=bu^2_x+eu^2_y+(c-a)u^3u^1,\nonumber\\
u^3_t=cu^3_x+fu^3_y+(a-b)u^1u^2,\label{3-waves}
\end{gather}
where $a$, $b$, $c$, $d$, $e$, $f$ are parameters and distinct from each other.

We need to search for the Hamiltonian structure and the Hamiltonian function of the system~\eqref{3-waves}. According to the conditions in Theorem~\ref{theorem8}, the compatibility solutions of the Hamiltonian operator \eqref{HS} can be deduced.
\begin{Proposition}\label{prop:3W}
The system \eqref{3-waves} is compatible with a Hamiltonian operator $P$ if and only~if%
\begin{align} \label{eq:cond3W}
d=a p + q,\qquad e=b p+q,\qquad f=cp+q
\end{align}
for arbitrary $p, q \in \mathbb{R}$ and
\begin{equation}\label{3-W-S}
P=\begin{pmatrix} S&0&0\\0&S&0\\0&0&S\\ \end{pmatrix} \partial_x+\begin{pmatrix} pS&0&0\\0&pS&0\\0&0&pS\\ \end{pmatrix} \partial_y+\begin{pmatrix} 0&Su^3&-Su^2\\-Su^3&0&Su^1\\Su^2&-Su^1&0\\ \end{pmatrix},
\end{equation}
where $S$ is another arbitrary constant.
\end{Proposition}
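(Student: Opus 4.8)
The plan is to feed the 3-waves data into the seven conditions of Theorem~\ref{theorem8} and into the Hamiltonianity conditions of Theorem~\ref{theor5}, and to solve the combined system for the unknown coefficients $g^{ij\alpha}(u)$, $b^{ij\alpha}_k(u)$, $\omega^{ij}(u)$ of $P$. For \eqref{3-waves} the velocity matrices are the constant diagonal matrices $V^x=\operatorname{diag}(a,b,c)$ and $V^y=\operatorname{diag}(d,e,f)$, and $W^1=(b-c)u^2u^3$, $W^2=(c-a)u^3u^1$, $W^3=(a-b)u^1u^2$. The key simplification is that $V^{i\alpha}_l$ is independent of $u$, so every term carrying a derivative $V^{i\alpha}_{l,j}$ drops and each contraction against $V$ collapses to a diagonal entry, e.g. $V^{j\alpha}_s g^{is\beta}=V^{j\alpha}_j g^{ij\beta}$. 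As in the proof of Theorem~\ref{theorem8}, the conditions split into three groups: the ultralocal condition \eqref{syscon1}, which together with the Jacobi identity \eqref{con2} constrains $\omega$; the homogeneous block \eqref{syscon2}--\eqref{syscon5} with \eqref{con1}, which constrains $(g,b)$; and the mixed conditions \eqref{syscon6}--\eqref{syscon7} with \eqref{con3}, which couple the pieces.

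First I would determine $\omega$. Solving the linear system \eqref{syscon1} for the given quadratic $W$, subject to the finite-dimensional Jacobi identity \eqref{con2}, should single out the Lie--Poisson (cross-product) bivector $\omega^{ij}=S\varepsilon^{ijk}u^k$ with a single free scale $S$; one checks directly that this $\omega$ solves \eqref{syscon1} precisely because $(a-b)+(b-c)+(c-a)=0$. Next, the homogeneous block \eqref{syscon2}--\eqref{syscon5} together with the Dubrovin--Novikov conditions \eqref{con1} forces $g^x$ and $g^y$ to be constant and $b^{ij\alpha}_k\equiv0$ (for a constant metric the Christoffel symbols, and hence $b$, vanish, consistently with $\frac{\partial g^{ij\alpha}}{\partial u^k}=b^{ij\alpha}_k+b^{ji\alpha}_k$). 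I expect this index bookkeeping --- tracking the cyclic and antisymmetric sums in \eqref{syscon4}--\eqref{syscon5} over three components and two spatial directions --- to be the most laborious, though entirely routine, step.

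The crux is then to upgrade ``constant'' to ``scalar'' and to extract the parameter constraints. With $b=0$ and $\omega=S\varepsilon u$, the tensor defined after Theorem~\ref{theor5} becomes $T^{ikj\alpha}=g^{il\alpha}\omega^{kj}_{,l}=S\,g^{il\alpha}\varepsilon^{kjl}$, and the symmetry \eqref{con3-1}, namely $T^{ikj\alpha}=T^{kji\alpha}$, forces $g^{ii\alpha}=g^{kk\alpha}$ and $g^{ij\alpha}=0$ for $i\neq j$; hence $g^x=c_x I$ and $g^y=c_y I$ are scalar. Inserting $g^\alpha=c_\alpha I$, $\omega=S\varepsilon u$ and the diagonal $V$ into \eqref{syscon6} gives, for each pair $(i,j)$, the clean identity $-c_\alpha\big(W^i_{,j}+W^j_{,i}\big)+S\varepsilon^{ijk}u^k\big(V^{j\alpha}_j-V^{i\alpha}_i\big)=0$. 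For $\alpha=x$ the $(1,2)$ instance reads $(b-a)(S-c_x)u^3=0$, so $c_x=S$ since $a,b,c$ are distinct; for $\alpha=y$ the three instances give $S(e-d)=c_y(b-a)$, $S(f-e)=c_y(c-b)$ and $S(f-d)=c_y(c-a)$, which with $p:=c_y/S$ are precisely $d=ap+q$, $e=bp+q$, $f=cp+q$. A final check that \eqref{syscon7} and the second relation \eqref{con3-2} are then automatically satisfied completes the necessity.

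For sufficiency I would verify by direct substitution that, whenever \eqref{eq:cond3W} holds, the operator \eqref{3-W-S} is Hamiltonian (its homogeneous part is a constant, hence flat, metric times the identity, and its ultralocal part is the Lie--Poisson structure, so \eqref{con1}, \eqref{con2} and \eqref{con3} all hold) and satisfies the seven conditions of Theorem~\ref{theorem8}. The main obstacle I anticipate is the determination of $\omega$: showing that the cross-product bivector is the \emph{only} Hamiltonian ultralocal structure compatible with $W$ requires fully solving \eqref{syscon1}, rather than merely exhibiting the solution. Once $\omega$ is pinned down, the remaining steps are linear algebra that, pleasantly, is non-degenerate precisely because $a,b,c$ are distinct --- the relevant coefficient determinant equals $2(c-a)(a-b)(b-c)\neq0$.
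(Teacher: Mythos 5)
Your overall strategy---feeding the 3-waves data into the seven conditions of Theorem~\ref{theorem8} plus the Hamiltonianity conditions and solving for $(g,b,\omega)$---is the same as the paper's, and your endgame is correct: with $g^\alpha$ scalar, $b=0$, $\omega=S\varepsilon u$, condition \eqref{syscon6} does yield $c_x=S$ and $S(e-d)=c_y(b-a)$, $S(f-e)=c_y(c-b)$, $S(f-d)=c_y(c-a)$, which is exactly the parametrization \eqref{eq:cond3W}. The problem is that the two structural claims you build on are false, and they are precisely where the content of the proof lies. First, \eqref{syscon1} together with \eqref{con2} does \emph{not} single out $\omega^{ij}=S\varepsilon^{ijk}u^k$. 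Condition \eqref{syscon1} says $\mathcal{L}_W\omega=0$, so for \emph{any} first integral $f$ of the vector field $W$---for instance $f=\big(u^1\big)^2+\big(u^2\big)^2+\big(u^3\big)^2$, conserved because $(b-c)+(c-a)+(a-b)=0$---the bivector $\omega=f\cdot\varepsilon u$ again satisfies \eqref{syscon1}, since $\mathcal{L}_W(f\,\varepsilon u)=(Wf)\,\varepsilon u+f\,\mathcal{L}_W(\varepsilon u)=0$; and it satisfies \eqref{con2} as well, because for $N=3$ one can write $\omega^{ij}=\varepsilon^{ijk}A_k$ with $A=fu$, and the Jacobi identity reduces to $A\cdot(\nabla\times A)=f\,u\cdot(\nabla f\times u)=0$. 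So the solution space of your first step is infinite-dimensional, and $\omega$ cannot be pinned down there, no matter how completely you solve \eqref{syscon1}.

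Second, the homogeneous block \eqref{syscon2}--\eqref{syscon5} together with \eqref{con1} does \emph{not} force $g$ constant and $b\equiv0$. Since $V$ is constant diagonal, those equations only give $g^{ij\alpha}=0$ for $i\neq j$ with $g^{ii\alpha}=g^{ii\alpha}\big(u^i\big)$, and $b^{ij\alpha}_l=0$ unless $i=j=l$ with $b^{ii\alpha}_i=b^{ii\alpha}_i\big(u^i\big)$; for example $g^{iix}=g^{iiy}=\mathrm{e}^{u^i}$, $b^{iix}_i=b^{iiy}_i=\frac{1}{2}\mathrm{e}^{u^i}$ (a flat diagonal metric written in curved coordinates) passes both your second step and the full set of Dubrovin--Novikov conditions. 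In the paper both gaps are closed by the \emph{mixed} conditions, which couple the two sectors you tried to solve separately: the diagonal part of \eqref{syscon6} gives $\big(g^{ii\alpha}\big)'W^i=0$, hence $g$ constant ($g^{iix}=C_i$, $g^{iiy}=D_i$); the off-diagonal part of \eqref{syscon6} then fixes $\omega$ to be linear in $u$ with coefficients determined by the $C_i$, $D_i$; and finally \eqref{syscon1} and \eqref{syscon7} force $b^{ii\alpha}_i=0$, $C_1=C_2=C_3=S$, and $D_1=D_2=D_3=pS$ together with the constraint \eqref{eq:cond3W}. So the repair is not ``fully solving \eqref{syscon1}'' (that would not restore uniqueness) but interleaving \eqref{syscon6}, \eqref{syscon1}, \eqref{syscon7} with the homogeneous block, as the paper does; your scalar-metric argument via \eqref{con3-1} and the extraction of $p$, $q$ can then be kept as the concluding step.
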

\begin{proof}
By performing calculations \eqref{syscon2}--\eqref{syscon5}, we can deduce that $g^{ij\alpha}=0$ ($i\neq j$), and $g^{ii\alpha}$ depends only on $u^i$. Additionally, we obtain $b^{ij\alpha}_l = 0$ ($i\neq j$ or $i\neq l$), and~$b^{ii\alpha}_i$ depends only on~$u^i$. Continuing with the calculation \eqref{syscon6}, we can further deduce that
\begin{gather*}
g^{iix}=C_i,\qquad g^{iiy}=D_i,\qquad
\omega^{12}=\frac{C_1(c-a)+C_2(b-c)}{b-a}u^3,\\
\omega^{13}=\frac{C_1(a-b)+C_3(b-c)}{c-a}u^2,\qquad
\omega^{23}=\frac{C_2(a-b)+C_3(c-a)}{c-b}u^1,
\end{gather*}
where $C_i$ and $D_i$ are arbitrary constants.
Finally, utilizing equations \eqref{syscon1} and \eqref{syscon7}, it can be deduced that
\begin{gather*}
b^{ii\alpha}_i=0,\qquad C_1=C_2=C_3=S,\qquad
D_1=D_2=D_3=S\frac{f-e}{c-b}=S\frac{f-d}{c-a}=S\frac{d-e}{a-b},
\end{gather*}
where $S$ is an arbitrary constant.

Therefore, a constraint must be imposed, namely
$
\frac{f-e}{c-b}=\frac{f-d}{c-a}=p$.
Solutions of this constraint depend on an additional arbitrary parameter $q$ as in \eqref{eq:cond3W}.
By synthesizing the aforementioned results, we obtain the operator \eqref{3-W-S}, which satisfies the Hamiltonian conditions.
\end{proof}

According to Proposition \ref{prop:3W}, the compatible two-dimensional 3-waves system has the form
\begin{gather}
u^1_t=au^1_x+(pa+q)u^1_y+(b-c)u^2u^3,\nonumber\\
u^2_t=bu^2_x+(pb+q)u^2_y+(c-a)u^3u^1,\nonumber\\
u^3_t=cu^3_x+(pc+q)u^3_y+(a-b)u^1u^2.\label{eq:3W-sol}
\end{gather}
As prescribed by the general theory, the compatibility of the system \eqref{3-waves} with the Hamiltonian structure \eqref{3-W-S} is only a necessary condition for \eqref{3-waves} to be a Hamiltonian system. In fact, \eqref{3-waves} is Hamiltonian with a (1+0)-order operator if and only if $q=0$
and $P$ as in \eqref{3-W-S}, with the corresponding Hamiltonian functional
\begin{gather*}
H=\int \frac{a}{2S}{\big(u^1\big)}^2+\frac{b}{2S}{\big(u^2\big)}^2+\frac{c}{2S}{\big(u^3\big)}^2,
\end{gather*}
where $S$ is an arbitrary constant. Note that the Hamiltonian case is essentially one-dimensional, as it can be seen by the change of spatial variables $\tilde{x}=x$, $\tilde{y}=p x -y$ from which $\partial x+p\partial_y\mapsto\partial_{\tilde{x}}$.

\subsubsection{Relation with the Poisson cohomology}
Note that in this example the results of Proposition \ref{prop:biham} cannot be applied, and we cannot directly relate the Poisson cohomology of $P$ with the bi-Hamiltonian cohomology of the two homogeneous Poisson structures. Indeed, for $N=3$ we cannot rely either on the isomorphism between $H(P)$ and $H_{\rm dR}(M)$ established by Lichnerowicz \cite{lichnerowicz1977varietes}, because $P_0$ is not invertible, or on the triviality result for the first and second \emph{smooth} Poisson cohomology group of $\mathfrak{so}^*(3)$ proved by Conn \cite{conn85} and Ginzburg--Weinstein \cite{gw92}, because we consider a broader class of vector fields: indeed, we can explicitly obtain non-Hamiltonian symmetries of $P_0$, invalidating the assumption~${X_0={\rm d}_0h_0}$. Moreover, a linear change of independent variables as the one discussed in \cite{ccs17} simplifies the computations for the first-order operator and quickly leads to $H^1_1(P_1)\cong\mathbb{R}^6$.

However, the results of the previous paragraph show that there only exist a one-real parameter family of real 3-waves systems compatible with a Hamiltonian operators but not Hamiltonian -- indeed, we can rewrite \eqref{eq:3W-sol} as
\begin{gather*}
\begin{pmatrix}u^1_t\\[1mm] u^2_t\\[1mm] u^3_t\end{pmatrix}=\begin{pmatrix}a\big(u^1_x+p u^1_y\big)+(b-c)u_2u_3\\[1mm] b\big(u^2_x+p u^2_y\big)+(c-a)u_3u_2\\[1mm] c\big(u^3_x+p u^3_y\big)+(a-b)u_1u_2\end{pmatrix}+q\begin{pmatrix} u^1_y\\[1mm] u^2_y\\[1mm] u^3_y\end{pmatrix}
\end{gather*}
and the system is not Hamiltonian for $q\neq0$. It can be observed that the parameter $q$ can be set to 0 by a \emph{time-dependent} change of independent variables, $\tilde{x}=x$, $\tilde{y}=-y-qt$, $\tilde{t}=t$, giving~${\partial_t-q\partial_y\mapsto\partial_{\tilde{t}}}$.

The transformation between non-Hamiltonian ($q\neq 0$) and Hamiltonian ($q=0$) two-dimen\-sion\-al 3-waves equation is not in the class of natural transformations of evolutionary systems, as we already discussed in the previous example. However, Hamiltonian two-dimensional 3-waves equations are essentially monodimensional, as explained in the previous paragraph.

\section{Conclusion}
In this paper, the necessary conditions for a multidimensional first-order quasilinear system to admit a Hamiltonian structure are discussed. Firstly, we computed the conditions under which the (1+0)-order nonhomogeneous hydrodynamic operator satisfies the property of skewadjointness and $[P,P]=0$ using the PVA theory. Next, leveraging the theory of differential coverings, we have computed the compatibility conditions of multidimensional first-order quasilinear systems which admit (1+0)-order operators as their Hamiltonian structures. Finally, two examples for the $N=D=2$ system and a real reduction of the 3-waves system are presented. For the first case, we consider three Hamiltonian structures of hydrodynamic type along with the ultralocal structure. Based on the previous conclusions, we investigate which pairs of Hamiltonian structures possess Hamiltonian properties and search for compatible systems with such structures. The appearance of certain arbitrary constants in the compatible quasilinear systems indicates the existence of non-Hamiltonian systems. This consistency aligns with the results of the first Poisson cohomology group and highlights the essentially cohomological nature of the compatibility conditions. For the second case, we have obtained the Hamiltonian operator compatible with the 3-waves system and the constraints on the parameters in the system by computing compatibility conditions. Under specific conditions, we have derived the Hamiltonian structure and Hamiltonian functional of the 3-waves system.

There are natural extensions of this research. On the one hand, multi-dimensional \emph{nonlocal} Hamiltonian structures and equations, and their corresponding cohomological aspects, have not been deeply investigated yet -- works like Ferapontov and Mokhov's \cite{mokhov1994hamiltonian} deal with nonhomogeneous nonlocal structures but in one spatial dimension only. A future research direction is extending our study to equations admitting (weakly) multidimensional non-local structures. On the other hand, we showed how the Poisson cohomology of the nonhomogeneous Hamiltonian structures can be related to the bi-Hamiltonian cohomology of the homogeneous components, but in practice we performed explicit computations to obtain our result; computing bi-Hamiltonian cohomologies in general is very complicated, and until today it has been done only for homogeneous, first-order structures \cite{cps16,lz05}. Another future research direction is the study of bi-Hamiltonian cohomology in the nonhomogeneous case.

\subsection*{Acknowledgements}
This work was sponsored by the National Science Foundation of China (grant no.~12101341), Ningbo City Yongjiang Innovative Talent Program and Ningbo University Talent Introduction and Resarch Initiation Fund. M.C.\ wishes to thank P.~Vergallo and I.~Marcut for the useful insights.

\pdfbookmark[1]{References}{ref}
\LastPageEnding

\end{document}